\newtheorem{mydef}{Definition}
\newtheorem{thm}{Theorem}
\newtheorem{cor}{Corollary}
\newtheorem{prop}{Property}
\begin{document}
\title{On the Complexity of Sorted Neighborhood}
\author{\IEEEauthorblockN{Mayank Kejriwal}
\IEEEauthorblockA{University of Texas at Austin\\
kejriwal@cs.utexas.edu}
\and
\IEEEauthorblockN{Daniel P. Miranker}
\IEEEauthorblockA{University of Texas at Austin\\
miranker@cs.utexas.edu}
}

\maketitle
\fontsize{10pt}{10.2pt}
\selectfont

\begin{abstract}
Record linkage concerns identifying semantically equivalent records in databases. Blocking methods are employed to avoid the cost of full pairwise similarity comparisons on $n$ records. In a seminal work, Hern{\`a}ndez and Stolfo proposed the Sorted Neighborhood blocking method. Several empirical variants have been proposed in recent years. In this paper, we investigate the complexity of the Sorted Neighborhood procedure on which the variants are built. We show that achieving maximum performance on the Sorted Neighborhood procedure entails solving a sub-problem, which is shown to be NP-complete by reducing from the Travelling Salesman Problem. We also show that the sub-problem can occur in the traditional blocking method. Finally, we draw on recent developments concerning approximate Travelling Salesman solutions to define and analyze three approximation algorithms.            
\end{abstract}

\begin{IEEEkeywords}
Blocking, Record Linkage, Sorted Neighborhood, Data Matching, Complexity

\end{IEEEkeywords}

\IEEEpeerreviewmaketitle

\section{Introduction}\label{introduction}
Record linkage concerns identifying pairs of records that refer to the same underlying entity but are \emph{syntactically} disparate. The problem goes by multiple names in the database community, examples being \emph{entity resolution} \cite{swoosh}, \emph{instance matching} \cite{ERsurvey},  \emph{co-reference resolution} \cite{coref}, \emph{hardening soft} databases \cite{hardening} and the \emph{merge-purge} problem \cite{stolfo}. 

Given $n$ records and a sophisticated \emph{similarity} function $g$ that determines whether two records are  equivalent, a na{\"i}ve record linkage application would run in time $\Theta(t(g)n^2)$, where $t(g)$ is the run-time of $g$. Scalability indicates a two-step approach \cite{recordlinkagesurvey}. First, \emph{blocking} generates a \emph{candidate set} of promising pairs that have the potential to be duplicates. The vast majority of pairs are discarded in this step, leading to significant savings \cite{christensurvey}. Because of the need to limit complexity of record linkage to \emph{near} linear-time, blocking has emerged as a research area in its own right \cite{datamatching, christensurvey}.   

\emph{Sorted Neighborhood} is a popular blocking method published originally by Hern{\`a}ndez and Stolfo \cite{stolfo}. The method was found to have excellent empirical performance \cite{stolfo,stolfo98}. In the past two decades, numerous empirical variants have been published \cite{windowing, adaptive}, including an application to XML duplicate detection \cite{xml}. Parallel implementations also continue to be researched. For example, a  \emph{MapReduce}-based implementation of Sorted Neighborhood was published in 2012 \cite{mapreduce, snmapreduce}. The evidence indicates that Sorted Neighborhood remains topical in the data matching community. 

Table \ref{example1}, used as a running example throughout the paper, illustrates the original method. First, a \emph{blocking key} is defined, and applied on each record to generate a \emph{blocking key value} or BKV for the record. In Table \ref{example1}, it is defined as extracting and concatenating initial characters from attribute tokens in the record in order to generate the BKV. Each record's BKV has also been noted in Table \ref{example1}. Next, the BKVs are used as \emph{sorting keys}. Finally, a window of constant size $w \geq 2$ is slid over the sorted records from beginning to end, with records sharing a window paired, and the pair added to the candidate set. With $w=2$, for example, record pair\footnote{$r_i$ refers to record with ID $i$, $i \in \{1,2,\ldots , 7\}$} ($r_1$,$r_2$) would get added to the (initially empty) candidate set. Sliding the window forward, record pair ($r_2$,$r_3$) is added. The process continues, with record pair ($r_6$,$r_7$) being the final addition to the set. 

We define the \emph{w-ordering problem} as the problem of sorting records that have the \emph{same} BKVs, as in the case of records $r_1$,$r_2$ and $r_3$. The definition is formally given as Definition \ref{wordering} in Section \ref{windowing}. Suppose that a polynomial-time \emph{scoring heuristic} $f$ is provided, such that $f$ returns a real-valued similarity score for a given record pair. The goal is to order records so as to maximize the score of the resulting candidate set for given $f$ and $w$, but without violating the sorting order imposed by the BKVs. 
Assuming $w=2$ and a heuristic based on first and last name similarity, candidate set score will be maximized for the ordering in Table \ref{example1}. Table \ref{example1} is then said to be a \emph{maximum-score} 2-ordering for the corresponding \emph{set} of records $\{r_1,r_2,\ldots , r_7\}$. As an example of a 2-ordering that's not maximum-score, consider reversing the positions of $r_5$ and $r_6$. In this case, the reversal would cause \emph{two} potentially duplicate pairs to get left out of the candidate set, but which contributed scores to the candidate set earlier. 
 
Herein it is shown that, in the general case, achieving a maximum-score w-ordering for a set of records is NP-complete. A Karp reduction from the NP-complete Travelling Salesman Problem (TSP) is presented (Theorem \ref{2optim}, Section \ref{windowing}) \cite{TSPpath}. 
To the best of our knowledge, w-ordering has not been studied in previous literature on blocking methods. A possible explanation is that practitioners assumed a \emph{random} ordering with a large window size $w$ to yield a good empirical solution, especially for small datasets. As Hern{\`a}ndez and Stolfo found in their own experiments, such an approach is outperformed by a multi-pass Sorted Neighborhood approach with inexpensive blocking keys, and small window sizes \cite{stolfo}. In particular, 3 passes, the minimum window size of 2  and transitive closure in the second record linkage step was found to achieve a good balance of run-time and accuracy on a test database\footnote{As evidence, we refer the reader to Figure 4 and the conclusion in the original paper \cite{stolfo}}.

Given these findings and that the run-time of a multi-pass approach is proportional to both $w$ and the number of runs \cite{stolfo}, we argue that refining Sorted Neighborhood further even for $w=2$ is an important problem. A review of TSP literature shows that improved approximation bounds for the \emph{max tour-TSP} variant continue to be proposed \cite{maxTSP}. By reducing maximum-score 2-ordering \emph{to} max TSP, we devise three polynomial-time approximation algorithms for maximum-score 2-ordering. The goal is to improve theoretical SN performance by presenting tractable, bounded approximations for maximum-score 2-ordering.      

Two of the three proposed algorithms present multi-pass Sorted Neighborhood but with approximate solutions to maximum-score 2-ordering integrated into the procedure. A third algorithm presents a similar solution for \emph{traditional blocking} \cite{christensurvey} in the \emph{MapReduce} paradigm \cite{mapreduce}. We present this case for two reasons. First, the case shows that solutions to the ordering problem need not be restricted to Sorted Neighborhood, but potentially apply to other popular blocking methods as well. Secondly, it demonstrates that 2-ordering solutions do not necessitate serial architectures.  

The three algorithms invoke a max TSP subroutine as a black box, and their qualitative performance is shown to closely mirror that of the invoked subroutine. This implies that further improvements in max TSP bounds directly lead to similar improvements in the algorithms. Using current TSP results, the bound is 61/81 for arbitrary non-negative edge weight functions \cite{maxTSP}. We devise an appropriate reduction and show that two of our three algorithms have \emph{exactly} this bound.

We summarize run-time and quality results for all three algorithms for both the uniform and Zipf distribution \cite{zipf} of BKVs in a principled fashion. Both distributions are known to occur commonly in practice and were recently used  in a related analytical work on blocking \cite{christensurvey}.

\begin{table}
\caption{Records sorted using blocking key values (BKVs)}
    \begin{tabular}[t]{ |p{0.4cm}|p{1.8cm} | p{1.8cm} | p{1.1cm} || p{1.2cm} |}
    \hline
    {\bf ID}& {\bf First Name} & {\bf Last Name} & {\bf Zip} & {\bf BKV} \\ \hline
1&Cathy  & Ransom & 77111 & CR7   \\ \hline
2&Catherine  & Ridley & 77093 & CR7   \\ \hline
3&Cathy  & Ridley & 77093 & CR7   \\ \hline
   4&John & Rogers  & 78751 & JR7  \\ \hline
5&J. & Rogers & 78732 & JR7  \\ \hline
6&John  & Ridley & 77093 & JR7   \\ \hline
7&John & Ridley Sr. & 77093 & JRS7 \\ \hline
    \end{tabular}
\label{example1}
\end{table}

The outline of the paper is as follows. Section \ref{relatedwork} describes related work, and Section \ref{prelims} describes preliminaries. Section \ref{windowing} defines the w-ordering problem and Section \ref{approximation} presents approximation algorithms. Section \ref{futurework} lists two conjectures and concludes the work. 

\section{Related Work}\label{relatedwork}
\subsection{Record Linkage}
As a problem first noted over five decades ago by Newcombe et al. \cite{newcombe}, record linkage has been the focus of efforts in structured, semistructured and unstructured data communities \cite{recordlinkagesurvey}. It is common to separate efforts in the unstructured data community, where the problem is commonly called \emph{co-reference} or \emph{anaphora resolution} \cite{coref}, from those in the structured and semistructured data communities \cite{recordlinkagesurvey}. In the late 1960s, Fellegi and Sunter placed record linkage in a Bayesian framework \cite{fellegi}, and the model continues to guide state-of-the-art research, which is also influenced heavily by contemporary research in the AI community \cite{winkler}. For example, rule-based approaches were popular during the 1980s and 1990s \cite{rulebased}, but machine learning methods have gained prominence in the last decade \cite{marlin}. Three recent surveys are by Elmagarmid et al. \cite{recordlinkagesurvey}, K{\"o}pcke and Rahm \cite{rahmsurvey} and Winkler \cite{winkler}. A generic, powerful framework that addresses some of the challenges of modern record linkage, both in theory and practice, is \emph{Swoosh} \cite{swoosh}. Several open-source toolkits implementing record linkage techniques are available to the practitioner; we list SecondString \cite{secondstring} and Febrl \cite{febrl} as good examples.

Some alternate record linkage models have recently become popular, including \emph{collective} record linkage \cite{collective1},\cite{collective2} and \emph{iterative} record linkage \cite{harra}.
The problem is also important in the \emph{linked data} and Semantic Web community \cite{bernerslinked}, owing to documented growth of linked open data (LOD\footnote{\url{linkeddata.org}}). Many techniques originally developed for relational databases are being adapted for LOD, including rule-based and machine learning approaches \cite{silk}, \cite{knofuss}. A full survey on Semantic Web record linkage systems was provided by Ferraram et al. \cite{ERsurvey}. Other applications of record linkage include data integration \cite{lenzerini}, knowledge graph identification \cite{kgi}, and biomedical linkage \cite{febrl}. 

Given the expense of record linkage, blocking was recognized as an important preprocessing step even when the problem first emerged \cite{newcombe}. The \emph{traditional blocking} method, which is similar to hashing, continues to be popular \cite{christensurvey}. The Sorted Neighborhood method was proposed in the 1990s, and as noted in Section \ref{introduction}, continues to be used and adapted due to its impressive empirical performance \cite{stolfo}. Christen compares important blocking methods in his survey \cite{christensurvey}, in which he also verifies the good performance of both Sorted Neighborhood and traditional blocking.
We note that, while several empirical variants of Sorted Neighbhorhood exist, all of them rely on the fundamental procedure that was first described in the original paper \cite{stolfo}. The procedure will be formally characterized in Section \ref{windowing}.

Finally, parallel and distributed techniques for record linkage are an active area of research \cite{pswoosh, dswoosh}. MapReduce has emerged as an important paradigm, owing to its documented advantages; we refer the reader to the original paper for an excellent introduction \cite{mapreduce}.

For a synthesis of the multiple threads of record linkage research, we refer the reader to the recently published data matching text by Christen \cite{datamatching}.

\subsection{Travelling Salesman Problem (TSP)}\label{rw-tsp}

Complexity proofs in this paper mainly rely on the Travelling Salesman Problem (TSP), which is among the oldest and best studied NP-complete problems \cite{algorithms}. The classic version of the problem, proposed at least as early as 1954 \cite{tspfirst}, is \emph{min tour-TSP}. Specifically, assume a weighted, undirected and complete graph $G=(V,E,W)$ with arbitrary edge weights. The problem is to locate a minimum-weight Hamiltonian cycle. Even with weights set to either 0 or 1, min tour-TSP was shown to be NP-complete, by virtue of a Karp reduction from the Hamiltonian cycle problem \cite{algorithms}. TSP for directed graphs (also known as \emph{asymmetric} TSP) was also shown to be NP-complete \cite{asymmetric}. In this paper, we only consider symmetric variants and undirected graphs. 

Many variants of TSP have since been shown to be NP-complete, including for weight functions that are metric \cite{christofides} or even Euclidean \cite{euclidean}. Two variants of importance herein are the \emph{min path-TSP} and \emph{max tour-TSP} variants with arbitrary \emph{non-negative} weights, both of which will be described in Section \ref{tsp}.

We note that not all TSP variants are equal from an \emph{approximability} perspective. Define the weight\footnote{We uniformly use the word \emph{weight} instead of \emph{cost} (or \emph{score}) since both min and max optimization problems are considered in this paper} of a tour-TSP solution to be the sum of weights of all edges in the tour; the weight of a path-TSP solution can be similarly defined.
Let the weight of an optimal min tour-TSP solution be $\phi^*$.
A polynomial-time $\rho$-approximation algorithm is an algorithm that is guaranteed to find a solution with weight at most (or \emph{at least}, for \emph{max} variants) $\rho \phi^*$, where $\rho$ is a constant and is denoted as the \emph{approximation ratio} \cite{rho}. Note that $\rho \geq 1$ for min variants and $\rho \leq 1$ for max variants. It is known that for min tour-TSP with  arbitrary weights, a $\rho$-approximation algorithm does not exist unless $P=NP$. However, a $\rho$-approximation algorithm exists for max tour-TSP with arbitrary \emph{non-negative} weights \cite{maxTSP}, and also for min tour-TSP if the weights are \emph{metric} \cite{christofides}. 

The first approximation scheme proposed for metric min tour-TSP was by Christofides, with $\rho=3/2$ \cite{christofides}. The difficulty of TSP is attested to by the fact that this approximation ratio is yet to be improved. Fortunately, approximation ratios continue to be updated for max tour-TSP, as described in Section \ref{tsp}. For a full discussion of TSP, we refer the reader to the seminal book on the subject by Reinelt \cite{tspbook}. In their text, Cormen et al. provide a thorough introduction to the general topics of NP-completeness and approximations \cite{algorithms}. 
 
\section{Preliminaries}\label{prelims}
\subsection{Problem Setting}

The \emph{relational data model} is assumed in this paper, with a brief formalism reproduced for completeness.
A relational database schema $S'$ is a finite set of relation \emph{names}. Each individual name $R' \in S'$ is associated with a set of \emph{attributes}.  
An \emph{instance} $S$ of schema $S'$ assigns to each $R' \in S'$, a finite set $R \in S$ of records. For each attribute in the attribute set of $R' \in S'$, a record in $R \in S$ either has an \emph{attribute value} or NULL, which is a reserved keyword used to indicate missing or non-existent attribute values.  

In this paper, we assume that $S=\{R\}$ and $S'=\{R'\}$. In other words, a single instance $R$ is assumed, with name $R'$ and $m \geq 1$ attributes. 
A single schema is a standard assumption in much of existing record linkage literature  \cite{recordlinkagesurvey}. The original Sorted Neighborhood paper additionally assumed only a single instance \cite{stolfo}. Typically, if more than one instance is expected, possibly with different schemas, a \emph{schema integration} step must be incorporated into the pipeline \cite{schemasurvey}.

We also assume that the number of attributes (or columns) is much smaller than the number of records (or rows), and that `processing' a record takes constant-time. Three real-world examples of such processing are counting tokens in a record, generating token initials (as in Table \ref{example1}), and generating token n-grams. Both assumptions above are standard in the blocking community when analyzing blocking methods \cite{christensurvey}.

\subsection{Travelling Salesman Problem (TSP) variants}\label{tsp} 
In Section \ref{rw-tsp}, we noted that the symmetric TSP variants take as input a \emph{complete}, \emph{undirected} and \emph{weighted} graph $G=(V,E,W)$ without self-loops. Define a Hamiltonian path as a path that includes every vertex in the graph exactly once \cite{algorithms}. 

Define the problem of finding a minimum-weight Hamiltonian path in $G$ as the min path-TSP \cite{TSPpath}. The \emph{decision} version of the problem instance aditionally accepts an integer $k$, and needs to determine if a Hamiltonian path with cost \emph{at most} $k$ exists. Three versions of the problem have been studied, and all are NP-complete \cite{hoogeveen}. In the first version, which is of primary concern in this paper, path endpoints are not specified and the algorithm returns \emph{True} if there exists \emph{any} Hamiltonian path with cost at most $k$. In the other two versions, one or both endpoints are respectively given. These versions are therefore more constrained than the first version.

Hoogeveen adapted Christofides' cubic 3/2-approximation algorithm for all three versions, and showed that the 3/2 bound held for the first two versions \cite{hoogeveen}. For the third version (both endpoints specified), he showed a 5/3 bound\footnote{This surprising result showed that finding a constrained path is \emph{harder} than finding a tour, from an approximability perspective \cite{hoogeveen}}. In 2012, An et al. improved the 5/3 bound to $\frac{1+\sqrt{5}}{2}$ \cite{2012}. In the most recent work we are aware of, Seb\H{o} improved this bound even further to 8/5 \cite{2013}. Hoogeveen's original conclusion on the difficulty of the problem (compared to the tour problem) still stands, since all these bounds are greater than 3/2, which has yet to be improved, to the best of our knowledge.  

The \emph{max} version of tour-TSP is similar to min tour-TSP, except that the problem is to locate a maximum-weight Hamiltonian circuit, with the weight function assumed to be \emph{non-negative} \cite{maxmintsp}.  
Despite being seemingly similar, max tour-TSP turns out to be easier to approximate than min tour-TSP; the currently best known deterministic algorithm runs in cubic time (in the number of vertices) and has approximation ratio 61/81 for an arbitrary non-negative weight function \cite{maxTSP}.

More importantly, we note that max tour-TSP and its variants continue to invite improvements \cite{maxmetrictsp2009},\cite{maxTSP}, and that the weight function does not have to be metric for a polynomial-time approximation algorithm with constant approximation ratio to be devised. 

Unlike min TSP, max TSP approximations are appropriate only for the \emph{tour} versions. For our purposes, we will use the first version of min path-TSP to show NP-completeness of maximum-score 2-ordering in Section \ref{windowing}, while approximate solutions to max tour-TSP will be used for devising approximate solutions to maximum-score 2-ordering in Section \ref{approximation}. 

\section{The w-ordering problem}\label{windowing}
\subsection{Sorted Neighborhood}
To begin, Sorted Neighborhood assumes a blocking key to be given. For clarity, the \emph{functional} definition of a blocking key is provided below:
\begin{mydef}\label{bk}
\emph{Given a set $R$ of records and an alphabet $\Sigma$, define a \emph{blocking key} to be a function $b: R \rightarrow \Sigma^*$}
\end{mydef}  
Let $b(r)$ (for some $r \in R$) be denoted as the blocking key value (BKV) of $r$. Given a finite set of records $R$ and blocking key $b$, let $Y$ be denoted as the \emph{set} of BKVs for $R$. Note that $|Y| \leq |R|$. The inequality is strict if more than one record has the same BKV. 

Assume a \emph{total order} on $\Sigma^*$, and by consequence, $Y$. In keeping with the earlier assumption in Section \ref{relatedwork}-A that processing a record is a \emph{constant-time} operation, BKV computation should not be an expensive operation \cite{stolfo}, \cite{christensurvey}.

Given the run-time of $b$ to be $t(b)$ per record, an SN algorithm would first generate $Y$ in time $O(|R|t(b))$, and then convert $R$ into a sorted list, $R^l$, using the BKVs in $Y$ as the sorting keys. Assuming a comparison sort, the step would take $O(|R| log |R|)$ and was found to be the most expensive SN step in practice \cite{stolfo, christensurvey}. This also implies that $t(b)$ is usually $o(log|R|)$. Henceforth, we consider $t(b)$ to be $O(1)$. In Section \ref{approximation}, we lift this assumption and consider arbitrarily expensive blocking keys when we present and analyze approximation algorithms. 

In the \emph{merge} step, the \emph{w-window} is slid from the first record in $R^l$ to the last record in exactly $|R^l|-w+1$ sliding steps. In each such step, pair the \emph{first} record in the window with all \emph{other} records sharing the window, to add exactly $w-1$ unique pairs to the candidate set $\Gamma$. In the final sliding step, pair \emph{every} record in the window with every other record to generate $w(w-1)/2$ pairs, ensuring that all\footnote{In a simplified implementation, the last window would not be treated differently from the other windows \cite{stolfo}. This will not change the analysis since it only removes a constant additive term} records sharing a window are paired and added to $\Gamma$ \cite{christensurvey}. 

An advantage of SN is that $|\Gamma|$ exactly equals $(|R^l|-w)(w-1)+w(w-1)/2$ and is a \emph{deterministic} function of $w$. It is independent of the blocking key $b$, and the actual distribution of BKVs that $b$ generates.
Referring to Table \ref{example1} again, consider the merge step for $w=3$. There would be $7-3+1=5$ sliding steps. In the first four steps, two unique pairs are generated and added to $\Gamma$. In the final step, three pairs are generated. $\Gamma$ contains $(7-3)*2+3*(3-1)/2=11$ pairs.

Let $\Gamma_m$ be the subset of true positives included in $\Gamma$. The \emph{Pairs Completeness} (PC) of $\Gamma$  is defined as $|\Gamma_m|/|\Gamma|$ \cite{tailor}. The metric is commonly used to evaluate blocking procedures and is an indication of the \emph{coverage} or \emph{recall} of the candidate set \cite{christensurvey}.
As described informally in Section \ref{introduction}, the sorting of $R^l$ can make a difference to $PC$, if it results in true positives getting left out of $\Gamma$. Since $Y$ already has a total order, the problem occurs if records share the \emph{same} BKV. Suppose a set of $q > 1$ records $R_y=\{r_1,\ldots, r_q\}$ have the same BKV $y$. Notationally, the $q$ records are said to fall within the same \emph{block} $R_y$, identified by the BKV $y$ \cite{christensurvey}.

To break ties, an additional input is required, similar to the blocking key $b$, but operating at a finer level of granularity. This motivates us to define a \emph{scoring heuristic} $f$:
\begin{mydef}\label{f}
\emph{Given a set $R$ of records, define a \emph{scoring heuristic} on \emph{unequal} inputs to be a \emph{symmetric} function $f: R \times R \rightarrow \mathbb{R}^+\cup \{0\}$, with run-time per invocation bounded above by $O(|R|^c)$ for some constant $c$. $\forall r \in R$, $f(r,r)$ is undefined.}
\end{mydef}  

Given a set of pairs (for example, the candidate set $\Gamma$), the \emph{score} of that set can be computed by calculating and summing the score of every pair in the set. Given a list of records $R^l$, the score of the list will depend on the window size $w$. Specifically, the merge step will first have to be run on the list and the score, calculated for the generated set of pairs. Algorithm \ref{alg1} summarizes the process. We designate the score of the list returned by Algorithm \ref{alg1} as the \emph{w-score}, since the score depends on $w$. 

A \emph{maximum-score w-ordering} for a \emph{set} $R$ of records can now be defined: 
\begin{mydef}\label{wordering}
\emph{Given a set $R$ of records, a constant window parameter $w$, and a scoring heuristic $f$, define the \emph{maximum-score w-ordering} for $R$ to be an ordering of $R$ given by the list $R^l$, such that a strictly higher w-score exists for no other ordering. }
\end{mydef} 

Intuitively, while (without imposing additional constraints) it is incorrect to think of $f$ as a \emph{probability density} function, a high score on an input record pair indicates a high degree of belief\footnote{On the part of the domain expert who provided$f$ and $b$} that the pair should  be included in $\Gamma$. Although we have not defined $f$ to be dependent on $b$ in any way, a practical design would probably consider both functions in tandem. We further note that even though $f$ is restricted to run in polynomial time, it would, in practice, be expected to be inexpensive, similar to the blocking key $b$. Many such heuristics have been documented in the literature \cite{datamatching}, two good examples being \emph{token-Jaccard} and \emph{cosine similarity}. Both functions are commonly in use in the record linkage community, and are known to work well in a variety of blocking scenarios \cite{datamatching}.

Recall that Sorted Neighborhood first assigns each record a BKV and generates a set of blocks, where each block is a set $R_y$ of records sharing the same BKV $y$. let us assume that a total of $u$ BKVs were generated and that the set $Y$ of BKVs is $\{y_1,\ldots, y_u\}$. We also noted that $|Y|$ (and therefore, $u$) can be at most $|R|$ because of the functional definition\footnote{\emph{Many-many} blocking keys do exist \emph{beyond} the scope of Sorted Neighborhood and are used in some modern blocking methods \cite{christensurvey}; we do not consider them in this paper} of the blocking key in Definition \ref{bk}. Let the total order on $Y$ be $y_1 \leq \ldots \leq y_u$ and the sorting order be ascending. After the BKV generation and sorting phase, we are left with an ordered list of \emph{blocks} $<R_{y_1},\ldots,R_{y_u}>$. 

Before running the merge (or sliding window) step, each block should \emph{ideally} be ordered so that the w-score of the resulting ordered list of \emph{records} $R^l$ is maximized for given $w$ and $f$. This is a \emph{constrained} ordering problem; the maximum-score w-ordering of the full set $R$ of records might yield a list that potentially disobeys the total order imposed on $Y$. In other words, the list could yield a candidate set that is not a \emph{valid} Sorted Neighborhood output, given the inputs. Given this observation, we define a \emph{maximum-score Sorted Neighborhood} (max SN) as follows: 
\begin{mydef}\label{msn}
\emph{Given a scoring heuristic $f$, windowing constant $w$ and blocking key $b$, define \emph{maximum-score Sorted Neighboorhood} (max SN) as a Sorted Neighborhood algorithm that generates (from all valid candidate sets) a candidate set $\Gamma$ with maximum score.}
\end{mydef}  

Since max SN is an SN algorithm, it must obey the ordering constraint just described. Given the various inputs, the candidate set output by max SN is the best result achievable for the SN blocking method. Note that changing any of these inputs (while keeping $R$ intact) can lead max SN to output a different candidate set. 

Furthermore, depending on how the scoring heuristic is defined, a max SN algorithm can be realized in two different ways. First, define a \emph{local} scoring heuristic as a scoring heuristic that is constrained to returning 0 for every record pair $(r,s)$ such that $r$ and $s$ have different blocking key values. On the other hand, a \emph{global} scoring heuristic has no such constraint, except for the ones imposed in the original Definition \ref{f}. 

For local $f$, we can claim the following:
\begin{thm}\label{claim1}\emph{
If $f$ is local and the window size is $w$, maximum-score w-ordering each block \emph{independently} in the ordered list of blocks $<R_{y_1},\ldots,R_{y_u}>$ is both necessary and sufficient for max SN.} 
\end{thm}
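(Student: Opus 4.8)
The plan is to exploit the locality of $f$ to decompose the global w-score into a sum of independent per-block contributions, and then invoke the elementary fact that a sum of independent terms is maximized exactly when each term is maximized. First I would observe that in any valid SN ordering the records of a single block $R_{y_i}$ occupy a \emph{contiguous} segment of $R^l$: since the list must respect the total order $y_1 \leq \cdots \leq y_u$ on BKVs, and records sharing a BKV are indistinguishable under the sort key, the only freedom a valid ordering retains is the internal permutation of each block. Consequently the set of valid SN orderings is in bijection with the product of the per-block internal orderings, and internally permuting one block leaves the absolute positions (and hence the internal geometry) of every other block untouched.

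Next I would characterize exactly which pairs land in $\Gamma$. Because $f$ is local, every cross-block pair contributes $0$, so only same-block pairs carry score. For a same-block pair $(r,s)$ at internal positions $i<j$ of its contiguous block, I would show, using the merge step of Section \ref{windowing}, that $(r,s)$ enters $\Gamma$ if and only if $j-i \leq w-1$: a non-final window adds the pair exactly when $r$ heads the window and $s$ falls within it, while the final-window clause supplies precisely the tail pairs, so that in every case the pair enters $\Gamma$ iff $j-i \leq w-1$. The crucial point is that this inclusion condition is the \emph{internal} distance $j-i$, which is invariant under the block's placement in $R^l$ and under the orderings of the other blocks. Hence the within-block score of $R_{y_i}$ equals the standalone w-score of $R_{y_i}$ and is a function of its internal ordering alone.

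Combining these, the global w-score equals $\sum_{i=1}^{u}$ (standalone w-score of $R_{y_i}$), a sum whose $i$-th summand depends only on the internal ordering of the $i$-th block. For sufficiency I would note that ordering every block to attain its own maximum standalone w-score makes each summand maximal, hence the total maximal, so the result is a max SN. For necessity I would argue the contrapositive: if some block were not max-score w-ordered, then re-ordering it to its own optimum while freezing all other blocks would strictly raise that single summand, and hence the total score, contradicting maximality; since such a re-ordering preserves block contiguity and the BKV order, it remains a valid SN ordering, so the argument is legitimate.

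I expect the main obstacle to be the second step, namely pinning down the exact inclusion rule for within-block pairs and confirming that it is governed solely by the internal distance. The care is needed at block boundaries, where a window straddles two blocks, and at the tail of $R^l$, where the special final-window convention applies; here I would lean on the paper's own observation (the footnote in Section \ref{windowing}) that the final window changes the count by at most a constant, to keep the characterization ``included $\iff$ internal distance $\leq w-1$'' clean. Once that invariance is secured, the locality-induced additive decomposition and the independence of the summands make both directions of the equivalence routine.
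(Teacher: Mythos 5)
Your proposal is correct and follows essentially the same route as the paper's own proof: locality zeroes out all cross-block pairs, the candidate-set score therefore decomposes additively into per-block contributions that depend only on each block's internal ordering, and sufficiency plus necessity (by contradiction) follow from maximizing a sum of independent terms. Your explicit characterization of pair inclusion via internal distance $j-i \leq w-1$ is a slightly more careful justification of the step the paper asserts directly from the semantics of Algorithm~\ref{alg1}, but it is not a different argument.
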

\begin{proof}
In Appendix.
\end{proof}
The proof sketch of this theorem is fairly intuitive; the key observation to note in proving the claim is that if a block were \emph{not} maximum-score w-ordered, then such an ordering would yield a strictly higher score for $\Gamma$. Since paired records not from the same block have score 0, such an ordering cannot influence the scores contributed by neighboring blocks. The claim also demonstrates why we designated this particular definition of $f$ as \emph{local}, since each block can be maximum-score w-ordered locally, in order to achieve a global maximum.

We show that, even for $w=2$ and some integer $k'$, merely determining the existence of an ordering for a set $R$ of records, such that the ordering has w-score at least $k'$, is an NP-complete decision problem. We call this problem the \emph{maximum-score w-ordering problem}, since an oracle to the problem can be used to find such an ordering, similar to related oracles for other NP-complete decision problems. 
\begin{algorithm}[t!]
\caption{Compute w-score for record list $R^l$}
\begin{spacing}{0.9}
\begin{algorithmic}
\STATE \textbf{Input :} \begin{itemize}
\item A list $R^l$ of records
\item A windowing constant $w$
\item A scoring heuristic $f$
 \end{itemize}
\STATE \textbf{Output :} \begin{itemize}
\item A real valued w-ordering score \emph{w-score}  
 \end{itemize}
\STATE \textbf{Method :}
\begin{enumerate}
\STATE Initialize empty set of pairs $\Gamma$
\STATE Initialize \emph{w-score} to $0$
\IF{$|R^l| \leq w$}
\STATE $\Gamma= \{\{r, s\} | r \neq s\}$, $r,s$ are records in $R^l$
\STATE Goto line 8
\ENDIF
\FORALL {$i \in \{1, \ldots |R^l|-w+1\}$}
\FORALL {$j \in \{i+1, \ldots i+w-1\}$}
\STATE $\Gamma= \Gamma \cup \{\{R^l[i], R^l[j]\}\}$
\ENDFOR
\ENDFOR
\STATE $\Gamma=\Gamma \cup \{\{R^l[i], R^l[j]\} | i \neq j \wedge i,j \in \{|R^l|-w+2, \ldots |R^l|\}\}$
\FORALL {$\{r,s\} \in \Gamma$}
\STATE \emph{w-score}=\emph{w-score}+$f(r,s)$
\ENDFOR
\STATE{Output \emph{w-score}} 
\end{enumerate}
\end{algorithmic}
\end{spacing}
\label{alg1}
\end{algorithm}

\begin{thm}\label{2optim}\emph{
Maximum-score 2-ordering of a set $R$ of records is NP-complete.}
\end{thm}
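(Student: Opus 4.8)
The plan is to establish both membership in NP and NP-hardness, with the hardness coming from a Karp reduction from the first (endpoint-unspecified) version of min path-TSP, which Section \ref{tsp} has already recalled to be NP-complete (\cite{hoogeveen}, \cite{TSPpath}). Membership in NP is the easy half: a certificate is simply a candidate ordering $R^l$ of the set $R$. Given such a list, running the merge step of Algorithm \ref{alg1} with $w=2$ yields the candidate set of $|R|-1$ consecutive pairs and evaluates its score with $|R|-1$ invocations of $f$. Since $f$ runs in time $O(|R|^c)$ by Definition \ref{f}, the entire verification, including comparing the resulting w-score against the target $k'$, is polynomial in $|R|$, so the decision version lies in NP.

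For hardness I would start from a min path-TSP instance: a complete undirected graph $G=(V,E,W)$ with a non-negative weight function $W$ and an integer $k$, asking whether a Hamiltonian path of total weight at most $k$ exists. I would map this to a maximum-score 2-ordering instance by taking the record set $R=V$, fixing $w=2$, and defining the scoring heuristic on each unequal pair by $f(u,v)=M-W(u,v)$, where $M=\max_{e\in E}W(e)$. This $f$ satisfies every requirement of Definition \ref{f}: it is symmetric because $W$ is symmetric, it is non-negative because $M$ dominates every edge weight, and it is computable in constant time per pair by a single lookup and subtraction. The construction is plainly polynomial.

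The crux is the score accounting, which I expect to be the cleanest step rather than the hardest. For $w=2$ the merge step pairs only adjacent records, so an ordering $R^l$ corresponds exactly to a Hamiltonian path $P$ of $G$, and its w-score equals $\sum_{i=1}^{|R|-1} f(R^l[i],R^l[i+1]) = (|R|-1)M - W(P)$, since every Hamiltonian path uses exactly $|R|-1$ edges. Hence maximizing w-score is equivalent to minimizing $W(P)$, and with the threshold set to $k'=(|R|-1)M-k$ we obtain that $G$ admits a Hamiltonian path of weight at most $k$ if and only if $R$ admits a 2-ordering of w-score at least $k'$. Combined with NP-membership, this establishes NP-completeness.

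The main obstacle to watch is staying inside the constraints of Definition \ref{f} while reversing the direction of optimization: the subtraction $M-W$ is exactly what converts a min-weight Hamiltonian-path instance into a max-score (Hamiltonian-path) instance, and the choice $M\ge\max_{e\in E}W(e)$ is precisely what keeps $f$ non-negative, while the common edge count $|R|-1$ across all paths is what makes the transformation of the threshold exact. Two structural points also deserve care. First, the reduction needs $G$ complete so that $f$ is defined on every pair, which is already built into the TSP variants of Section \ref{tsp}. Second, the endpoint-free version of min path-TSP is the correct starting point precisely because a 2-ordering places no constraint on which records occupy the ends of $R^l$; reducing from an endpoint-constrained variant would force extra gadgetry. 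Finally, I would remark that by Theorem \ref{claim1} this hardness can be realized entirely within a single block of records sharing one BKV, so it is a genuine hardness result for the sub-problem underlying max SN rather than an artifact of relaxing the BKV ordering constraint.
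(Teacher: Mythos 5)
Your proposal is correct and follows essentially the same route as the paper's proof: a Karp reduction from the endpoint-unspecified min path-TSP in which each edge weight is subtracted from a large constant to turn path-weight minimization into 2-score maximization, plus the straightforward NP-membership check via Algorithm \ref{alg1}. The only differences are cosmetic --- you subtract from $M=\max_{e\in E}W(e)$ where the paper uses $W_E=\sum_{e\in E}W(e)$, and your threshold $k'=(|R|-1)M-k$ makes the equivalence ``score $\geq k'$ iff path weight $\leq k$'' exact in both directions, which is if anything a cleaner piece of arithmetic than the paper's $k'=W_E(m-1)-k+1$.
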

\begin{proof}
We show a polynomial-time reduction from min path-TSP, introduced in Section \ref{tsp}. The version used in this proof is that of both endpoints being \emph{unknown}. Recall that the decision version of the problem statement is to determine if a Hamiltonian path with cost \emph{at most} $k$ exists in a given complete, undirected, weighted graph $G=(V,E,W)$. The problem is known to be NP-complete even if weights are \emph{non-negative integers}, as we assume\footnote{It stays NP-complete even if weights are only 0-1 by reducing from the \emph{Hamiltonian path} problem \cite{algorithms}} \cite{hoogeveen}.

We begin the reduction by bijectively mapping each vertex $v \in V$ to a record $r$ and placing all mapped records in a set $R$. Suppose $|V|=m \geq 2$. Then the set $R$ contains the records $\{r_1,\ldots , r_m\}$.
Define the non-negative integer $W_E$ to be $\sum_{e \in E} W(e)$ where $W(e)$ is the weight of edge $e$. $W_E$ is a non-negative integer because all weights were assumed to be non-negative integers. We construct $f$ as a \emph{symmetric look-up table} as follows: the score between any two distinct records $r_i$ and $r_j$ ($i \neq j$, $i,j \in \{1,\ldots , m\}$)  is simply $W_E-W(\{v_i,v_j\})$, where $v_i$ and $v_j$ are the corresponding vertices. Constructed this way, $f$ is both symmetric and non-negative and is therefore an eligible scoring heuristic. $f$ also runs in polynomial-time, since each look-up requires (at worst) a pass over a table occuping $O(|R|^2)$ space.

The entire construction takes quadratic time. We query the maximum-score 2-ordering oracle for the existence of a list with 2-score at least $k'$, with $k'=W_E (m-1)-k+1$ (recall that $m=|V|=|R|$). $k'$ is an integer, since $W_E$ is an integer. We claim that the (boolean) output of the oracle is also the output of the original min path-TSP problem instance; hence, we claim a correct \emph{Karp reduction}. 

First, we prove correctness for \emph{True} oracle outputs. A \emph{True} output implies that a list with 2-score at least $k'$ exists; let the list be $<r_1,\ldots, r_m>$ without loss of generality. The 2-score of this list (per Algorithm \ref{alg1} semantics) is $\sum_{i =1}^{i =m-1} f(r_i,r_{i+1})$. By construction, $f(r_i,r_{i+1})=W_E-W(\{v_i,v_{i+1}\})$ and therefore, $\sum_{i =1}^{i =m-1} f(r_i,r_{i+1})=\sum_{i =1}^{i =m-1}(W_E-W(\{v_i,v_{i+1}\}))$. Because $W_E$ is independent of $i$ and the summation is over $m-1$ elements, we can rewrite the right hand side as $W_E(m-1)+\sum_{i =1}^{i =m-1}W(\{v_i,v_{i+1}\})$. Since the oracle returned \emph{True}, this quantity is at least $k'$; in other words, $W_E(m-1)+\sum_{i =1}^{i =m-1}W(\{v_i,v_{i+1}\}) \geq k'=W_E (m-1)-k+1$ by definition of $k'$ above. This in turn implies that $\sum_{i =1}^{i =m-1}W(\{v_i,v_{i+1}\}) \geq -k+1$ or $k < \sum_{i =1}^{i =m-1}W(\{v_i,v_{i+1}\})+1$. Since $k$ is an integer, this shows that $\sum_{i =1}^{i =m-1}W(\{v_i,v_{i+1}\}) \leq k$. But this implies that a Hamiltonian path\footnote{Graph theoretically, a path is defined as an alternating sequence of vertices and edges \cite{algorithms}} $<v_1, \{v_1,v_2\},v_2,\ldots,\{v_{m-1},v_m\},v_m>$ with weight at most $k$ exists. 

If the oracle returns \emph{False}, we can use the same sequence of equations and a proof by contradiction to show that it cannot be the case that a Hamiltonian path with weight at most $k$ exists in the input graph, since if it did, a corresponding list can be constructed with 2-score at least $k'$. Together, these arguments show that the proposed Karp reduction is valid. 

Finally, to show that the maximum-score 2-ordering problem is in NP, we accept a list $R^l$ as a certificate, input $R^l$ (with $w=2$) to Algorithm \ref{alg1} and compare the 2-score output by the algorithm to the input decision constant $k'$. For constant $w$ and polynomial-time $f$, Algorithm \ref{alg1} runs in (polynomial) time $O(t(f)|R^l|)$. The verification algorithm is therefore polynomial and maximum-score 2-ordering is in NP. 
In combination with the first part of the proof, we conclude that maximum-score 2-ordering is NP-complete.   

\end{proof}

\begin{cor}\label{cor1}\emph{
Maximum-score 2-ordering of a set $R$ of records is NP-complete for a scoring heuristic $f$ of the form $f=1-f'$, where $f'$ is metric with range [0,1].} 
\end{cor}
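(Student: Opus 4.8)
The plan is to recycle the Karp reduction of Theorem~\ref{2optim} almost verbatim, changing only the weight function so that the resulting scoring heuristic has the prescribed form $f = 1 - f'$ with $f'$ metric and valued in $[0,1]$. To guarantee that the source problem remains NP-complete under these constraints, I would reduce from the \emph{metric} version of min path-TSP (again with both endpoints unspecified), which is NP-complete by the hardness results underlying Hoogeveen's metric analysis \cite{hoogeveen, christofides}. As before, bijectively map each vertex $v_i \in V$ of the input metric graph $G=(V,E,W)$ to a record $r_i$, collecting the $m = |V| \geq 2$ records into a set $R$.

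The one genuinely new ingredient is the construction of $f'$. Let $W_{\max} = \max_{e \in E} W(e) > 0$ and define the symmetric look-up table $f'(r_i, r_j) = W(\{v_i, v_j\}) / W_{\max}$ for $i \neq j$. Because multiplying every distance by the positive constant $1/W_{\max}$ preserves symmetry and the triangle inequality, $f'$ inherits the metric property from $W$, and since $W(e) \leq W_{\max}$ for all $e$, its values lie in $[0,1]$. Setting $f = 1 - f'$ then yields a symmetric, non-negative function valued in $[0,1] \subseteq \mathbb{R}^+ \cup \{0\}$ that is computable by a single table look-up, so $f$ is an eligible scoring heuristic in the sense of Definition~\ref{f}. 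For any 2-ordering $\langle r_1, \ldots, r_m \rangle$, Algorithm~\ref{alg1} computes the 2-score $\sum_{i=1}^{m-1} f(r_i, r_{i+1}) = (m-1) - \tfrac{1}{W_{\max}} \sum_{i=1}^{m-1} W(\{v_i, v_{i+1}\})$, so maximizing the 2-score is exactly equivalent to minimizing the weight of the corresponding Hamiltonian path. I would therefore query the oracle with threshold $k' = (m-1) - k/W_{\max}$, and a short rearrangement shows that a 2-ordering of 2-score at least $k'$ exists if and only if a Hamiltonian path of weight at most $k$ exists in $G$, establishing the Karp reduction in both directions exactly as in Theorem~\ref{2optim}.

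Membership in NP is inherited directly: the certificate is the list $R^l$, verified by feeding it (with $w=2$) into Algorithm~\ref{alg1} and comparing the returned 2-score to $k'$, which runs in polynomial time. The main obstacle I anticipate is not the algebra but the two hygiene points surrounding the metric constraint. First, I must pin down that metric min path-TSP with unspecified endpoints is itself NP-complete, rather than merely citing the general (non-metric) path variant, so that the restricted source problem is legitimately hard. Second, I must be careful that it is $f'$, and not $f$, that is required to be a metric — the affine map $f = 1 - f'$ destroys the triangle inequality in general — and that the threshold $k'$, although now rational rather than integral, still admits a polynomial-size encoding because $k$, $m$, and $W_{\max}$ are all integers. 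Handling these two points is what turns the superficially trivial substitution into a valid proof.
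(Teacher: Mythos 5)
Your proposal is correct and follows essentially the same route as the paper: a reduction from metric min path-TSP with unspecified endpoints, with the scoring heuristic built as $f=1-f'$ from the (suitably normalized) metric weight function and the threshold transformed affinely, just as the paper's one-paragraph proof does by modifying Theorem~\ref{2optim}. If anything, your explicit division by $W_{\max}$ to force the range into $[0,1]$ and your exact threshold $k'=(m-1)-k/W_{\max}$ (which gives a clean if-and-only-if without the integrality offset) tighten two details that the paper's sketch leaves implicit.
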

\begin{proof}
We reduce from min \emph{metric} path-TSP (again with both endpoints unknown), which is also known to be NP-complete \cite{TSPpath}. We construct $f$ to have the form $1-f'$ (instead of subtracting from $W_E$, the sum of all edge weights), where $f'$ is the metric weight function of the input graph $G$. Also, $k'$ in Theorem \ref{2optim} is set to $(m-1)-k+1$. Otherwise, the proof remains virtually the same as that of Theorem \ref{2optim}.    
\end{proof}
The form above is important because many of the similarity functions in the literature adhere to it \cite{datamatching}. Examples include Jaccard and cosine similarities, whose \emph{distance versions} are known to be metric \cite{datamatching}. The corollary shows that, even for this special case, the problem is no easier.

This is also the case as $w$ increases, assuming $w$ is still a constant. Note that, while Theorem \ref{2optim} can be used to prove that maximum-score w-ordering is NP-complete for an \emph{arbitrary} constant $w$, it does not prove NP-completeness for \emph{any} constant $w \geq 2$. To understand the difference, consider the classic NP-complete problem of proving satisfiability of a k-CNF formula, for an arbitrary $k \geq 2$. This problem is NP-complete, since 3-CNF satisfiability is known to be NP-complete. However, this would not be true for \emph{any} constant $k \geq 2$, since 2-CNF satisfiability is known to be in P \cite{algorithms}. 

Reducing the ordering problem from typical variants of min TSP is not straightforward for $w>2$, since we relied on the fact that $w=2$ in the problem reduction. Instead, we reduce the maximum-score 2-ordering problem, which Theorem \ref{2optim} showed to be NP-complete, to the maximum-score w-ordering problem for a constant $w>2$ by using a polynomial-time \emph{scaling} mechanism. The proof is quite technical; we reproduce it in the Appendix for the interested reader.
\begin{thm}\label{coptim}\emph{
Maximum-score w-ordering of a set $R$ of records is NP-complete, for any constant $w > 2$. 
}\end{thm}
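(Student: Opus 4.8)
The plan is to Karp-reduce the maximum-score 2-ordering problem --- shown NP-complete in Theorem \ref{2optim} --- to maximum-score $w$-ordering for the fixed constant $w > 2$. Given a 2-ordering instance $(R, f, k')$ with $|R| = m$, I would build a $w$-ordering instance whose record set $R''$ augments $R$ with $(m-1)(w-2)$ fresh \emph{filler} records, enough to place $w-2$ fillers in each of the $m-1$ gaps between consecutive real records when the list is laid out in the intended template $r\, \underbrace{d\cdots d}_{w-2}\, r\, \underbrace{d\cdots d}_{w-2}\, \cdots\, r$. The number of added records is $O(m)$ for constant $w$, so the construction stays polynomial. The scoring heuristic $f''$ is assembled as a symmetric, non-negative lookup table (exactly as in the proof of Theorem \ref{2optim}, keeping it a legal heuristic per Definition \ref{f}): on real--real pairs it agrees with $f$, while filler-involving pairs receive carefully chosen guard weights described below.

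The design goal is the identity that, for any ordering realizing the template, the $w$-score computed by Algorithm \ref{alg1} equals a fixed constant (independent of which real record occupies which real slot) plus the 2-score of the induced ordering of $R$. The key arithmetic is that Algorithm \ref{alg1} pairs two list entries exactly when their positional distance is at most $w-1$; in the template, consecutive real records sit at distance exactly $w-1$ and so are paired, whereas records two real-slots apart sit at distance $2(w-1) > w-1$ and so are never paired (this strict inequality is precisely what needs $w > 2$). Thus, among real--real pairs, only consecutive ones contribute, each with weight $f$, reproducing exactly the 2-score; every remaining within-band pair involves a filler and, because the template fixes which positions are real versus filler, contributes a permutation-independent constant. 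Setting the threshold to $K = (\text{constant}) + k'$ then makes the two instances equivalent, and NP membership follows verbatim from Theorem \ref{2optim}: a candidate list is verified by running Algorithm \ref{alg1} with window $w$ and comparing its output against $K$.

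The hard part, and the reason the full argument is technical, is \emph{forcing} the template: I must choose the guard weights (all non-negative) so that every ordering achieving $w$-score at least $K$ is compelled to place the real records at uniform spacing $w-1$. This is delicate because non-negative weights cannot directly penalize a bad layout; the optimizer can either cluster real records to create extra within-band real pairs, or conversely open oversized filler gaps to suppress consecutive pairings, and a naive filler-adjacency bonus (which suffices for the $w=3$ case) fails to pin the spacing once $w-2 \geq 2$ fillers are available per gap. I would therefore make the fillers distinguishable and attach a large guard weight $M$ --- taken strictly larger than the total of all real--real scores, so that forfeiting even one guarded pair cannot be recouped --- to a rigid chain gadget among the fillers and their real neighbors, engineered so that its guard total is collected in full only by template orderings and is strictly smaller for any deviation. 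The core lemma is an exchange argument: starting from any above-threshold ordering, I show that every departure from uniform real spacing strictly decreases the guarded sum, so an optimal ordering must be a template, after which the decomposition above closes the reduction. Establishing this lemma, together with a careful accounting that no spurious within-band real pairing survives, is the main obstacle; the remaining steps are the routine bookkeeping already illustrated in Theorem \ref{2optim}.
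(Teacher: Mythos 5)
Your reduction direction is right (from maximum-score 2-ordering, as in the paper), but your construction --- singleton real records separated by $w-2$ filler records per gap --- leaves a genuine gap at exactly the point you flag: the forcing lemma is asserted, not established, and the gadget you sketch does not obviously exist. The difficulty is structural. Because the real records must remain freely permutable among the real slots (otherwise the induced 2-ordering cannot range over all orderings of $R$), any guard weight between a filler and ``its real neighbor'' is necessarily a guard weight between that filler and \emph{every} real record. Algorithm~\ref{alg1} sums the weights of all pairs at positional distance at most $w-1$, so a filler that ends up within a window of three real records collects \emph{three} such bonuses instead of two: clustering real records can strictly \emph{increase} your guarded total, which is the opposite of what the exchange argument needs. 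If instead you restrict the guards to filler--filler chain edges, nothing prevents an adversarial ordering that keeps every chain contiguous but places all $m$ real records consecutively; this realizes $\Theta(mw)$ real--real pairs rather than $m-1$, and since $f$ is non-negative and the decision is a lower-bound threshold, a low-2-score instance can then spuriously clear $K$. That breaks the backward direction of the Karp reduction. Your threshold choice ``$M$ larger than the total of all real--real scores'' only helps if every deviation provably forfeits at least one guarded pair, and that is precisely what neither variant of the gadget delivers. (A minor additional point: the inequality $2(w-1)>w-1$ holds for every $w\geq 2$, so it is not ``precisely what needs $w>2$''; what needs $w>2$ is only that there be at least one filler per gap.)

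The paper's construction avoids this trap by eliminating the filler/real asymmetry altogether: each original record $r_i$ is expanded into an \emph{internal set} of $w-1$ records, every intra-set pair receives the huge weight $\frac{1}{2}\sum_{u}\sum_{v}f(r_u,r_v)$, and cross-set pairs are nonzero only when the two records share the same \emph{internal ID}, in which case they inherit $f(r_i,r_j)$. Because a set of $w-1$ records fits inside one window, any ``interleaved'' ordering loses at least one intra-set pair, whose weight alone exceeds everything Rule~2 could ever contribute (Property~\ref{prop1}); the internal-ID matching then controls alignment (Property~\ref{prop2}), and same-ID records of non-adjacent sets sit at distance at least $w$, so no spurious cross pairs arise. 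In effect the ``guards'' live \emph{inside} the blown-up records rather than between distinguished filler and real populations, which is what makes the rigidity argument go through with non-negative weights. If you want to salvage your spacer-based construction, you would need to exhibit a concrete non-negative weighting and prove the exchange lemma against both failure modes above; as written, the core of the proof is missing.
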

\begin{proof}
In Appendix. 
\end{proof}
\begin{cor}\label{cor2}\emph{
Maximum-score w-ordering of a set $R$ of records is NP-complete, for any constant $w \geq 2$. }
\end{cor}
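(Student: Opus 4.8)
The plan is to obtain the corollary as an immediate consequence of the two preceding theorems via a simple case split on the fixed constant $w$. Any constant $w \geq 2$ falls into exactly one of two cases: either $w = 2$, or $w > 2$. For the first case, Theorem \ref{2optim} already establishes that maximum-score $2$-ordering is NP-complete. For the second, Theorem \ref{coptim} establishes NP-completeness of maximum-score $w$-ordering for every fixed constant $w > 2$. Since these two cases are mutually exclusive and jointly exhaust the range $w \geq 2$, the two results together cover every admissible value of $w$, which is precisely the claim. I would write this as a two-line case analysis, citing Theorem \ref{2optim} for $w=2$ and Theorem \ref{coptim} for $w>2$.

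The one point that genuinely requires care is that the case split be exhaustive in the strong, per-value sense, and this is exactly why Theorem \ref{coptim} is needed as a separate result rather than being folded into Theorem \ref{2optim}. As the $k$-CNF discussion preceding Theorem \ref{coptim} makes clear, NP-completeness for an \emph{arbitrary} fixed $w$ does not by itself entail NP-completeness for \emph{every} fixed $w$; the polynomial-time scaling construction in the proof of Theorem \ref{coptim} is what supplies hardness uniformly across all constants $w > 2$. Consequently there is no residual obstacle at the level of the corollary: all of the technical difficulty has already been discharged in Theorems \ref{2optim} and \ref{coptim}, and what remains is only the observation that $\{2\} \cup \{\, w : w > 2 \,\}$ exhausts $\{\, w : w \geq 2 \,\}$.
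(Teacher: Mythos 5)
Your proposal is correct and matches the paper's intent exactly: the corollary is stated immediately after Theorems \ref{2optim} and \ref{coptim} with no separate proof precisely because it is the trivial union of the case $w=2$ (Theorem \ref{2optim}) and the case $w>2$ (Theorem \ref{coptim}). Your remark about why the per-value hardness of Theorem \ref{coptim} is needed (rather than hardness for some arbitrary $w$) correctly mirrors the paper's own $k$-CNF discussion.
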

Mirroring the case of Corollary \ref{cor1}, the ordering problem for $w>2$ continues to be NP-complete even for metric $f$. The technical details are not repeated here.

\begin{figure}[t]
\includegraphics[width=7cm, height=4cm]{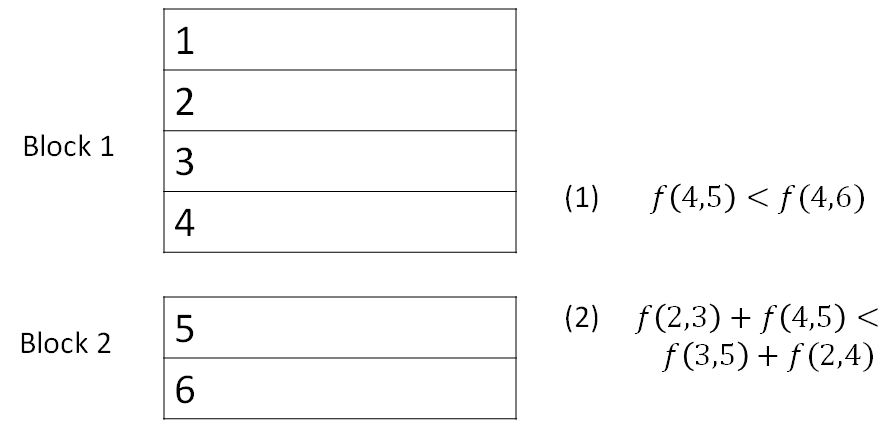}
\caption{A construction showing why maximum-score w-ordering each block separately is neither sufficient (given (1)) nor necessary (given (2)), assuming $w=2$, an arbitrary $f$ and that each block in the figure is maximum-score 2-ordered}
\label{gh}
\end{figure}
Devising a max SN algorithm even for local scoring heuristics becomes an NP-hard problem because of Theorem \ref{coptim}, since each independent block needs to be maximum-score w-ordered (Theorem \ref{claim1}). The problem is no easier if the heuristic is global, since if a polynomial-time oracle exists for an arbitrary heuristic $f$, it can be used to solve the special case of local $f$. 

There are other consequences of having a global scoring heuristic. First, Theorem \ref{claim1} is no longer true. A simple construction in Figure \ref{gh} illustrates why. Consider just two blocks that have been individually maximum-score 2-ordered. By itself, the \emph{first} condition ($f(4,5)<f(4,6)$) in the construction shows that this is no longer sufficient, since reversing records 5 and 6 will still be a maximum-score 2-ordering for Block 2, but the score of the overall candidate set will be higher. By itself, the \emph{second} condition shows that the maximum-score 2-ordering is also not necessary. Intuitively, if the scores are high for some pair of records straddling blocks, then this could theoretically be enough to compensate for any gains that could be achieved from local maximum-score 2-ordering that does \emph{not} place those records at the block \emph{boundaries}, as in the example.

Thus far, we assumed a single blocking key $b$ and \emph{single-pass} Sorted Neighborhood, but the analysis can be extended in a straightforward way to multi-pass Sorted Neighborhood. Specifically, assume a set $B=\{b_1,\ldots,b_c\}$of $c$ blocking keys, where $c \geq 1$ is some constant. For each key $b_i \in B$, the single-pass SN procedure is run and a candidate set $\Gamma_i$ is output. In this way, $c$ independent passes are run, and $c$ candidate sets are output. The final candidate set output by the entire procedure is simply the union of all $c$ sets \cite{stolfo},\cite{stolfo98}.

While the asymptotic analysis does not change, multi-pass SN runs slower (in practice) by a factor of $c$ on a serial architecture. Because passes are independent, both multi-core and shared-nothing parallel architectures are appropriate for the problem \cite{stolfo98},\cite{snmapreduce}. We can define \emph{max multi-pass SN} as a multi-pass SN algorithm with each individual pass meeting the requirement set in Definition \ref{msn}. Note that the windowing constant $w$ is assumed to be fixed over all passes, but each pass can have its own scoring heuristic. Formally, each pass now takes a pair $<b,f>$ as input, with a total of $c$ distinct pairs for a $c$-pass procedure. This further implies that there need not be $c$ distinct blocking keys and scoring heuristics, merely $c$ distinct pairs. 

Multi-pass SN has widely emerged as the method of choice (over single-pass SN) both because of parallelism and also because it was experimentally verified to increase the recall of $\Gamma$, mainly due to using a diverse set of blocking keys \cite{stolfo98}. In combination with a small window constant $w$, inclusion of false positives in the candidate set was also found to be greatly reduced \cite{stolfo}. 

\subsection{Traditional blocking}
Although the primary focus of this work is Sorted Neighborhood, we briefly show how the ordering problem arises in the hash-based \emph{traditional blocking} method, which continues to enjoy popularity due to its simple implementation \cite{christensurvey}. In the original version, a functional blocking key is assumed and each record is assigned a single BKV, exactly like in SN. However, a total order is \emph{not} assumed on the set of BKVs $Y$, which implies that the records cannot be sorted. Instead, each block $R_y$ is treated like a hash bucket, and the hash key is simply the BKV $y$. Records sharing a block are paired and added to $\Gamma$.

This last step leads to problems when we consider the issue of \emph{data skew} \cite{christensurvey}. If some block contains far too many records compared to other blocks, pairing records in that block will dominate run-time. Sorted Neighborhood systematically dealt with the issue by using a constant window size in the merge step. To address the same issue in traditional blocking, several ad-hoc techniques have been proposed \cite{papadakis}.  

In our recent work, we used the same sliding window procedure as SN in \emph{each} block and generated the resulting candidate set \cite{mayankblocking}. We showed competitive empirical performance of the technique (on standard benchmarks) even with a simple token-based blocking key. To distinguish this procedure from the SN merge procedure, we designate it as the \emph{block-merge step}, since the merge step is run on each block in isolation. 

It is straightforward to adapt the formalism presented thus far, if we assume the block-merge procedure for controlling data skew in traditional blocking. Maximum-score traditional blocking (or \emph{max traditional blocking}) can be defined in the same vein as in Definition \ref{msn}. Note that since a total ordering on $Y$ (and hence, \emph{stacking} of blocks against each other) does not exist in traditional blocking, the difference between global and local $f$ does not arise.

Finally, we can state a version of Theorem \ref{claim1} for max traditional blocking.
\begin{thm}\label{otb}\emph{
For any constant $w$, scoring heuristic $f$, and a set $\Pi$ of blocks generated by a functional blocking key, maximum-score w-ordering each block in $\Pi$ individually is both necessary and sufficient for max traditional blocking, assuming the block-merge procedure for generating the candidate set. }
\end{thm}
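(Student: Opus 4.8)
The plan is to exploit the fact that, under the block-merge procedure, the candidate set $\Gamma$ is a disjoint union of per-block candidate sets with no cross-block pairs, so the score of $\Gamma$ decomposes additively over the blocks in $\Pi$. First I would fix notation: for each block $R_y \in \Pi$, let $R_y^l$ denote an ordering of that block, let $\Gamma_y$ be the set of pairs produced by running the merge step (Algorithm \ref{alg1}) on $R_y^l$ in isolation, and observe that block-merge sets $\Gamma = \bigcup_y \Gamma_y$. Because every pair in $\Gamma_y$ consists of two records drawn from the same block $R_y$, and distinct blocks are disjoint sets of records, the $\Gamma_y$ are pairwise disjoint. Summing $f$ over $\Gamma$ therefore gives $\mathrm{score}(\Gamma) = \sum_{R_y \in \Pi} \mathrm{w\text{-}score}(R_y^l)$, where each summand is exactly the w-score of block $R_y$ under its chosen ordering.

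With this decomposition in hand, the key observation is that the only free choices are the orderings within each block; since there is no total order across blocks to respect (unlike SN), the orderings of different blocks are completely independent degrees of freedom. For sufficiency, if every block is maximum-score w-ordered, then every summand attains its maximal value simultaneously and independently, so their sum is maximized and max traditional blocking is achieved. For necessity I would argue by contradiction: suppose max traditional blocking holds but some block $R_{y^*}$ is not maximum-score w-ordered. Then by Definition \ref{wordering} there is a strictly higher-scoring ordering of $R_{y^*}$; replacing the ordering of $R_{y^*}$ by this one while leaving all other block orderings untouched leaves every other summand unchanged (by the additive decomposition) and strictly increases the $R_{y^*}$ summand, producing a candidate set of strictly higher score. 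This contradicts maximality, so every block must be maximum-score w-ordered.

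I expect this argument to be routine rather than hard, since the crux is simply the additive decomposition of the score over blocks. The one subtlety worth spelling out carefully is the disjointness of the $\Gamma_y$: because block-merge slides the window over each block in isolation, no pair straddles a block boundary, which is precisely why the result holds for an \emph{arbitrary} $f$ rather than only a local one. This stands in contrast to the SN setting of Theorem \ref{claim1}, where boundary pairs across stacked blocks do arise and locality of $f$ was needed to zero out their contribution; here the block-merge procedure removes cross-block interactions structurally, so no restriction on $f$ is required and the analogue of Theorem \ref{claim1} extends to the general case.
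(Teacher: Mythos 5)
Your proof is correct and follows essentially the same route the paper intends: it adapts the additive decomposition from the proof of Theorem \ref{claim1}, with the simplification that block-merge produces no cross-block pairs, so the per-block candidate sets are disjoint and no locality assumption on $f$ is needed. You have also correctly pinpointed why the functional-key (disjoint-blocks) hypothesis matters, which is exactly the point the paper's Figure \ref{manymany} counterexample turns on.
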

The proof is quite similar to that of Theorem \ref{claim1}; we do not repeat it. 
\begin{figure}[t]
\includegraphics[width=8cm, height=6cm]{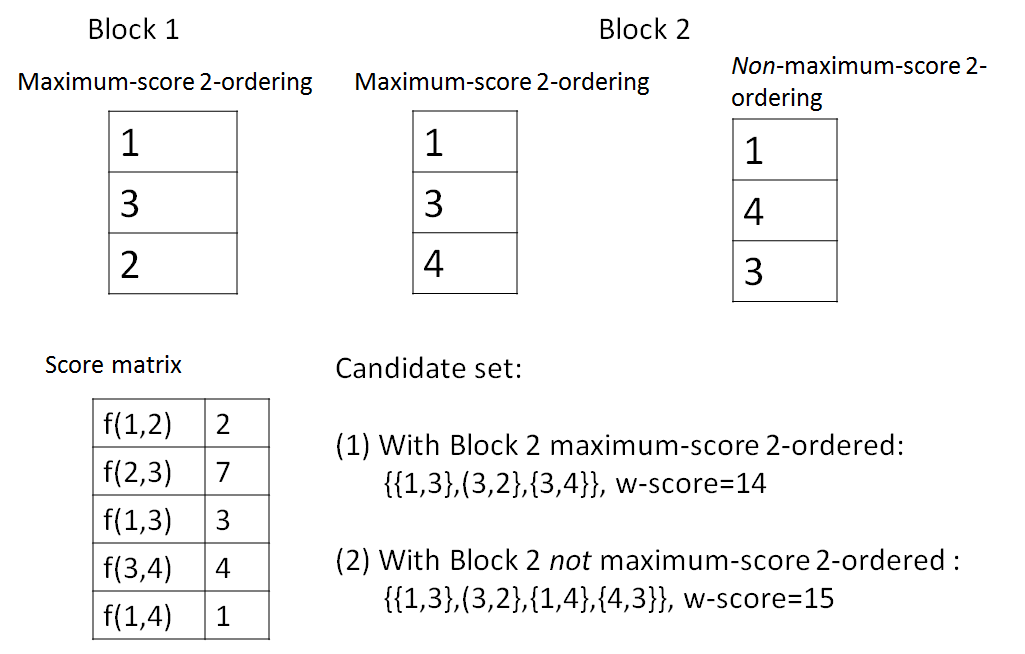}
\caption{A construction showing that Theorem \ref{otb} does not hold for many-many traditional blocking. The maximum-score 2-ordering of each block and the 2-scores may be verified by a brute-force calculation using the provided $f$}
\label{manymany}
\end{figure}

Interestingly, even though the difference between global and local $f$ does not arise for traditional blocking, a related issue arises if we consider traditional blocking with \emph{non-functional} blocking keys. Such keys can assign multiple blocking keys to a record. Just like Theorem \ref{claim1} was shown not to hold for global $f$, a simple construction (Figure \ref{manymany}) shows that Theorem \ref{otb} does not necessarily hold for \emph{many-many} traditional blocking. We leave for future work to determine the appropriate conditions for guaranteeing an optimal candidate set both for many-many traditional blocking, as well as Sorted Neighborhood with a global scoring heuristic.

\section{Approximate Solutions}\label{approximation}
Theorem \ref{2optim} showed that maximum-score 2-ordering is NP-complete. Thus, the next best course of action is to devise polynomial-time approximation algorithms, preferrably with good constant bounds. Given the close connection between the 2-ordering problem and TSP, and the progress in max tour-TSP approximations (Section \ref{tsp}), a natural question to ask is whether maximum-score 2-ordering can be reduced \emph{to} the appropriate max tour-TSP version. We show subsequently that this is feasible, and we utilize this reduction in the approximation algorithms proposed in this section.

In the rest of this section, we explicitly assume $w=2$. We present three approximation algorithms, with two algorithms addressing multi-pass Sorted Neighborhood for local and global scoring heuristics respectively, and one MapReduce algorithm for traditional blocking with the block-merge procedure and with $w=2$. There are a number of reasons why we focus exclusively on the $w=2$ case.

First, the complexity of multi-pass Sorted Neighborhood, even while neglecting the ordering problem, is known to be $O(c(|R|log|R|+w|R|))$, where $c$ is the number of passes, $w$ is the windowing constant and $R$, the input set of records \cite{stolfo}. Even though $c$ and $w$ are constants, they cannot be neglected in practice, as the experiments in the original paper showed \cite{stolfo}. It was found in the experiments that both run-time and the false-positives included in the candidate set increased rapidly with $w$. The conclusion was that, even for a test database with slightly under fourteen thousand records, the recall of $\Gamma$ with $c=3$ achieved a high value at $w=2$ and remained virtually flat thereafter. The recall was higher than with $c=1$ and with $w$ set to high values. We cited this result as a motivation in Section \ref{introduction}, and it is the main reason we focus on maximizing the performance of multi-pass SN for $w=2$.

The second problem with assuming an arbitrary $w$ is that TSP is no longer applicable. A reduction either to or from TSP is not evident for $w > 2$. The approximability status of the problem is also unknown, in the absence of a clear reduction.

For these reasons, we leave devising approximations for $w>2$ for future work.   
In the rest of this section, assume that a polynomial-time $\rho$-approximation algorithm for max tour-TSP is available as a subroutine, MAX TOUR-TSP. We have already cited one such algorithm in the literature \cite{maxTSP} but in general, any appropriate max tour-TSP approximation algorithm may be used. We note that if a randomized subroutine is used, then proposed algorithms also become randomized and approximation ratios are \emph{expected}, rather than guaranteed.      

Finally, the analysis will depend on the distribution of BKVs generated by the blocking keys input to the algorithms. We will conduct the analysis for both the \emph{uniform} distribution as well as the \emph{Zipf} distribution \cite{zipf}. The uniform distribution is the ideal case, since it assumes no data skew and all blocks have the same number of records. The Zipf distribution involves a realistic amount of data skew. For this reason, both distributions were taken into account in a recent survey of blocking methods \cite{christensurvey}. 

To describe the Zipf distribution, let $H_u$  be denoted as the partial harmonic sum for some positive integer $u$:
\begin{equation}\label{harmonic}
H_u=\sum_{i=1}^{u} \frac{1}{i}
\end{equation}
Given a set $R$ of records and a blocking key that assigns BKVs to records according to the Zipf distribution, let $|Y|=u$, that is, $u$ blocks are generated. In descending order by size, the $m^{th}$ block will have size $|R|/(mH_u)$ 

Attribute values in many practical databases have been known to occur with Zipf-like frequency, including US and Chinese firm sizes \cite{usfirm},\cite{chinesefirm}, and more importantly, personal names \cite{personalname}. The analysis assuming the Zipf distribution for blocking key values is therefore expected to match real-world scenarios more closely than the uniform distribution.  
\begin{algorithm}[t!]
\caption{Multi-pass Sorted Neighborhood, local $f$}
\begin{spacing}{0.9}
\begin{algorithmic}
\STATE \textbf{Input :} \begin{itemize}
\item Set $R$ of records
\item Set $C$ containing $c$ blocking key and local scoring heuristic \emph{pairs}
 \end{itemize}
\STATE \textbf{Output :} \begin{itemize}
\item Candidate set of pairs $\Gamma$  
 \end{itemize}
\STATE \textbf{Method :}
\begin{enumerate}
\STATE Initialize empty candidate set $\Gamma$
\FORALL {pairs $ <b,f> \in C$}
\STATE $\Gamma:=\Gamma \cup $Algorithm \ref{approx1}($R,f,b$)
\ENDFOR
\end{enumerate}
\end{algorithmic}
\end{spacing}
\label{approx0}
\end{algorithm}

\begin{algorithm}[t!]
\caption{Single-pass Sorted Neighborhood, local $f$}
\begin{spacing}{0.9}
\begin{algorithmic}
\STATE \textbf{Input :} \begin{itemize}
\item Set $R$ of records
\item Local scoring heuristic $f$
\item Blocking key $b$
 \end{itemize}
\STATE \textbf{Output :} \begin{itemize}
\item Candidate set of pairs $\Gamma$  
 \end{itemize}
\STATE \textbf{Method :}
\begin{enumerate}
\STATE Initialize empty multimap $M$ containing pairs of key-value-\emph{sets}, with the key being a BKV and the value-set, a set of records 
\STATE Initialize empty set of BKVs $Y$, and empty list $Y^l$
\STATE Initialize empty list of records $R^l$
\STATE Initialize empty candidate set $\Gamma$
\FORALL {records $r \in R$}
\STATE Apply $b$ on $r$ to get BKV $b(r)$
\IF {$b(r) \notin $ \emph{keyset(M)}}
\STATE Add pair $<b(r),\{\}>$ to M
\ENDIF 
\STATE Add $r$ to $M[b(r)]$, the value-set associated with $b(r)$
\STATE Add $b(r)$ to $Y$
\ENDFOR
\STATE Sort $Y$ using total order to get list $Y^l$
\FOR {(in-order) $y \in Y^l$}
\STATE Let $G:=RecordsToGraph(M[y],f)$, where $M[y]$ is the value-set associated with key $y$
\STATE Call MAX TOUR-TSP on $G$
\STATE Call \emph{TourToList} on TSP output to get list of records $R_y^l$
\STATE Append $R_y^l$ to $R^l$ 
\ENDFOR
\STATE Run merge procedure on $R^l$ with $w=2$, populate $\Gamma$
\STATE Output $\Gamma$
\end{enumerate}
\end{algorithmic}
\end{spacing}
\label{approx1}
\end{algorithm}

\begin{figure*}[t]
\includegraphics[width=17.5cm, height=5.5cm]{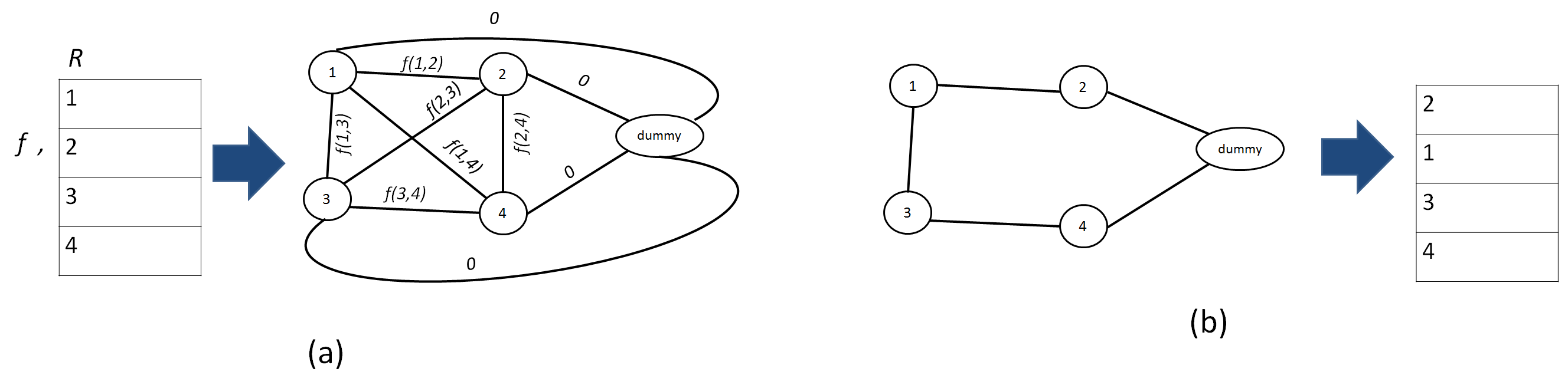}
\caption{The two conversion subroutines used in Algorithm \ref{approx1}. (a) illustrates \emph{RecordsToGraph} and (b) illustrates \emph{TourToList}}
\label{convert}
\end{figure*}

\subsection{Multi-pass Sorted Neighborhood with local scoring heuristics}\label{app1}

Algorithm \ref{approx0} presents the pseudocode for multi-pass Sorted Neighborhood that takes as input a set $R$ of records and a set $C$ of $c$ pairs, with each pair comprising a blocking key and a \emph{local} scoring heuristic. From the discussion at the end of Section \ref{windowing}, this does not imply $c$ unique blocking keys and $c$ unique scoring heuristics. For each of the $c$ pairs, Algorithm \ref{approx0} invokes Algorithm \ref{approx1}, and forms the union of the candidate set output by Algorithm \ref{approx1} and the current candidate set maintained by Algorithm \ref{approx0}. Each iteration of the loop in line 2 is therefore a \emph{pass} in the Sorted Neighborhood sense.    

Algorithm \ref{approx1} presents the pseudocode for approximating a solution to single-pass SN that accounts for 2-ordering, assuming an arbitrary local scoring heuristic. The algorithm begins (lines 1-4) by initializing some data structures, including a multimap with BKVs for keys and with each key pointing to its associated block\footnote{Which is the key's value-\emph{set}; hence, the term \emph{multi}map}. Lines 5-6 perform the BKV computation and block generation step, while line 7 uses the total order to get a sorted list $Y^l$ of BKVs. The list is traversed in order, and for each BKV $y$ in the list, the block $R_y$ is converted into an undirected, complete, weighted graph using the auxiliary subroutine \emph{RecordToGraph}. Figure \ref{convert}(a) illustrates the functionality of the subroutine; we do not provide the technical pseudocode here. Specifically, each record is bijectively mapped to a vertex. In addition a \emph{dummy} vertex is also created. The weight of any edge between two distinct non-dummy vertices $v_1$ and $v_2$ is simply $f(r_1,r_2)$, assuming records $r_1$ and $r_2$ were mapped to vertices $v_1$ and $v_2$ respectively. The weight of any edge between the dummy vertex and any non-dummy vertex is 0. 

MAX TOUR-TSP is then invoked on the graph, and a Hamiltonian circuit is output. A subtle point to note is that MAX TOUR-TSP must work for arbitrary (that is, not necessarily \emph{metric}) non-negative weight functions, regardless of whether the scoring heuristic is metric or non-metric. This is because adding the dummy vertex necessarily makes the weight function non-metric, assuming at least one non-zero weight. Consider an edge $\{v_1,v_2\}$ that has non-zero weight. In the constructed graph, the three edges connecting vertices $v_1,v_2$ and $dummy$ will not satisfy the triangle inequality since the dummy edges are 0. Thus, the 0 sum of the dummy edges will be strictly less than the non-zero weight of the third edge, which is a violation of the triangle inequality. 

This is one motivation for using a max TSP subroutine, since approximation algorithms for max tour-TSP exist that do not place metric assumptions on the weight function \cite{maxTSP}. To leverage better bounds for metric weight functions, a max path-TSP algorithm is required. To the best of our knowledge, none of the max tour-TSP algorithms recently proposed in the literature have been adapted (or can be easily adapted) to solve the path version \cite{maxTSP}. However, it is evident that such an algorithm can be used in Algorithm \ref{approx1}  (in place of MAX TOUR-TSP) if a user so desires, by modifying \emph{RecordsToGraph} so that the extra dummy vertex is not constructed.

\emph{TourToList}, illustrated in Figure \ref{convert}(b), is another straightforward auxiliary subroutine that is invoked on the Hamiltonian circuit output by MAX TOUR-TSP. Construct the list by starting (and ending) the circuit output by TSP from the dummy vertex, after which the dummy vertex is discarded. The list of vertices is reverse mapped to the list of corresponding records. An interesting point is that, in the example in Figure \ref{convert} (b), both (2,1,3,4) and (4,3,1,2) have equal w-scores. This is generally true, given that the graph is undirected. If $f$ is local, the choice of ordering will not matter, and \emph{TourToList} can arbitrarily return one of the two. For global $f$, the choice will matter, an issue we address in the next section. Line 10 runs the sliding window procedure on the generated list $R^l$ and the populated candidate set is output. 

The run-time of Algorithm \ref{approx1} depends on several factors, including the run-time of the blocking key $b$ per record pair and the \emph{distribution} of BKVs generated by $b$. Let the per-invocation run-time of $b$ be denoted as $t(b)$. Assume that $b$ generates $u$ blocking key values. That is, each BKV $y$ is assumed to refer to a block $R_y$ containing an \emph{equal} number ($=|R|/u$) of records. Finally, assume the \emph{amortized} run-time of $f$ to be $t(f)$ per record pair. Amortization is important for some commonly encountered scoring heuristics like \emph{cosine similarity} that have a start-up phase where token statistics (such as frequencies of tokens) need to be collected and stored over the set of records \cite{tfidf}. Once this phase has concluded, computing the similarity takes time that is near-constant, since it only involves a look-up and a simple calculation (like a dot product).   

Regardless of BKV distribution, lines 1-7 in Algorithm \ref{approx1} will run in time $O(t(b)|R|+ulog(u))$. For the remaining analysis, consider first the case of uniform distribution. That is, each of the $hu$ blocks have an equal number of records, which is $|R|/u$. We neglect issues due to rounding for the sake of analysis.   

The time taken by \emph{RecordsToGraph} on a single block is $O((|R|/u)^2t(f))$. Assume the TSP approximation subroutines to run in time $O(|V|^q)=O((|R|/u)^q)$ where $q$ is a constant and is denoted as the \emph{TSP constant}. Typically, $q \leq 3$ \cite{maxTSP}. The merge step (line 10) takes time $\Theta(|R|)$, given it involves exactly $|R|-2+1=|R|-1$ sliding steps. Thus, the total run-time of Algorithm \ref{approx1} for a uniform blocking key is $O(u((|R|/u)^2t(f)+(|R|/u)^q)+(t(b)+1)|R|+ulog(u))$ and the generated candidate set has size exactly $|R|-1$ or $\Theta(|R|)$, since each sliding step generates exactly one pair. Since $u=O(|R|)$ due to the functional definition of a blocking key, the expression above may be simplified further as $O(u((|R|/u)^2t(f)+(|R|/u)^q)+(t(b)+1)|R|)$.

If we conduct the same run-time analysis for Zipf distribution, the complexity of merge (and the candidate set size) remains the same, which we noted at the beginning of Section \ref{windowing} as an advantage of Sorted Neighborhood. Specifically, the size of $\Gamma$ is never dependent on the blocking key. However, the total run-time of Algorithm \ref{approx1} will change, since the TSP subroutine takes as input the graph representation of a single block in each invocation. In accordance with the Zipf distribution, the run-time of Algorithm \ref{approx1} becomes $O(\sum_{i=1}^{u}((|R|/(iH_u))^2t(f)+(|R|/(iH_u))^q)+(t(b)+1)|R|+ulog(u))$=$O(\sum_{i=1}^{u}((|R|/(iH_u))^2t(f)+(|R|/(iH_u))^q)+(t(b)+1)|R|)$. We can rewrite the first two terms as $t(f)(|R|/H_u)^2\sum_{i=1}^{u}1/i^2+(|R|/H_u)^q\sum_{i=1}^{u}1/i^q$. Unfortunately, the summations do not have closed forms, and we cannot simplify the expression further. 

For a comparison to the run-time of the same Sorted Neighborhood procedure that neglects the ordering problem,     

As for qualitative performance, we can prove the following about Algorithm \ref{approx1}:
\begin{thm}\label{approx1ratio}\emph{
The approximation ratio of Algorithm \ref{approx1} is exactly the approximation ratio of MAX TOUR-TSP.}  
\end{thm}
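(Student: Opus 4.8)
The plan is to prove the statement in two directions: first that Algorithm~\ref{approx1} inherits the guarantee of MAX TOUR-TSP (so its ratio is no worse than the subroutine's ratio $\rho$), and then that this guarantee is tight (so its ratio is no better than $\rho$).

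For the inheritance direction, I would begin by exploiting the locality of $f$ to decompose the $2$-score. Because \emph{TourToList} concatenates the per-block orderings in BKV order, every pair straddling two blocks in $R^l$ consists of records with distinct BKVs and is therefore scored $0$ by a local $f$; hence the $2$-score of $R^l$ equals the sum of the within-block $2$-scores. By Theorem~\ref{claim1} the optimal max~SN score decomposes the same way, as $\sum_y \mathrm{OPT}_y$, where $\mathrm{OPT}_y$ is the best attainable $2$-score of block $R_y$. The crux is then a weight-preserving bijection, per block, between orderings of $R_y$ and Hamiltonian circuits of the graph $G_y$ produced by \emph{RecordsToGraph}. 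Since an ordering $r_{\pi(1)},\dots,r_{\pi(q)}$ has $2$-score $\sum_{i=1}^{q-1} f(r_{\pi(i)},r_{\pi(i+1)})$, it is precisely a Hamiltonian path in the record-subgraph of $G_y$; appending the dummy vertex (whose incident edges all have weight $0$) to the two endpoints closes the path into a circuit of equal weight, and conversely every circuit of $G_y$, cut at the dummy, yields an ordering of equal $2$-score. Consequently the maximum-weight circuit of $G_y$ has weight exactly $\mathrm{OPT}_y$; MAX TOUR-TSP returns a circuit of weight at least $\rho\,\mathrm{OPT}_y$, and \emph{TourToList} converts it into an ordering of $2$-score at least $\rho\,\mathrm{OPT}_y$. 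Summing over blocks gives a total $2$-score of at least $\rho\sum_y \mathrm{OPT}_y = \rho\cdot\mathrm{OPT}$, so the guaranteed fraction of the optimum achieved by Algorithm~\ref{approx1} is at least $\rho$.

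For the tightness direction, I would exhibit an input on which Algorithm~\ref{approx1} attains a $2$-score of exactly $\rho\cdot\mathrm{OPT}$. Here I would use the reverse reduction: take an instance $G^\star$ on which MAX TOUR-TSP is tight, that is, returns a circuit of weight exactly $\rho$ times the optimum, and realize it as a single block. Concretely, create one record per vertex of $G^\star$, assign them all the same BKV, and set the local heuristic to $f(r_i,r_j)=W(\{v_i,v_j\})$; then \emph{RecordsToGraph} reconstructs $G^\star$ (augmented with the dummy), the single TSP call inside Algorithm~\ref{approx1} is made on this graph, and by the bijection above the resulting $2$-score equals the weight of the circuit the subroutine returns while $\mathrm{OPT}$ equals the optimal circuit weight. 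Matching these to the tight behavior of MAX TOUR-TSP drives the overall ratio down to exactly $\rho$, which together with the inheritance direction yields the claim.

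The step I expect to be the main obstacle is precisely this tightness direction, and specifically the dummy vertex. The bijection of the first part equates circuits of the augmented graph with \emph{paths} of the record-subgraph, so the TSP call inside Algorithm~\ref{approx1} is never made on an arbitrary max tour-TSP instance but only on one possessing a zero-weight vertex. A tight worst-case instance $G^\star$ for MAX TOUR-TSP need not have this structure, and naively adjoining a dummy to $G^\star$ changes the relevant optimum from a \emph{tour} of $G^\star$ to a \emph{path}, potentially letting the black-box subroutine evade its worst case. The delicate part is therefore to argue that the subroutine's tight bound is achievable (at least in the limit) on augmented graphs as well---either by producing a tight family that already contains a zero-weight vertex, or by giving an approximation-preserving reduction from general max tour-TSP to its restriction to augmented graphs---so that $\rho$ genuinely transfers to Algorithm~\ref{approx1} rather than merely lower-bounding its performance.
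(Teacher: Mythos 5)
Your first direction is essentially the paper's entire proof: the paper likewise invokes Theorem~\ref{claim1} and the locality of $f$ to write the optimum as $\Phi^*=\sum_{i}\Phi^*[y_i]$, then argues per block that the dummy vertex's zero-weight edges make the returned circuit's weight equal to the weight of its inner Hamiltonian path, so the $\rho$ guarantee on circuits transfers to the $2$-score of the list produced by \emph{TourToList}, and summing over blocks finishes the argument. Your bijection between orderings of $R_y$ and circuits of the augmented graph is the same correspondence the paper uses, just stated more explicitly.

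Where you diverge is the tightness direction, and here your instincts are sound: the obstacle you flag is real. The subroutine is only ever invoked on graphs containing a vertex all of whose incident edges have weight zero, and a worst-case family for general max tour-TSP need not survive restriction to that subclass, so ``exactly $\rho$'' does not follow from a generic tight instance. What you should know is that the paper does not attempt this direction at all; its proof establishes only the inequality (the algorithm achieves at least a $\rho$ fraction of $\Phi^*$) and then declares the theorem proved. So your second and third paragraphs identify a gap in the theorem's ``exactly'' phrasing rather than a gap you need to close to match the paper. If you only want to reproduce what the paper justifies, stop after your first paragraph; if you want to justify ``exactly,'' you would indeed need either a tight family already containing a zero-weight vertex or an approximation-preserving reduction to the augmented subclass, neither of which the paper supplies.
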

\begin{proof}
In Appendix.
\end{proof}
The run-time of the multi-pass procedure in Algorithm \ref{approx0} is $c$ times the run-time of Algorithm \ref{approx1}, if all the blocking keys have the same BKV distirbutions. This is unlikely, since one of the strengths of the multi-pass procedure is to accommodate diverse blocking keys. Nevertheless, assuming that BKVs generated by the blocking keys in the pairs in $C$ individually obey either the uniform or Zipf distribution, the run-time of Algorithm \ref{approx0} is simply a weighted sum (with weights adding up to $c$) of the appropriate Algorithm \ref{approx1} run-times. If other distributions (not considered in this paper) are accommodated, the analysis can be extended in a straightforward manner. We can also state the following corollary:
\begin{cor}\label{multipassapprox}\emph{
The approximation ratio of Algorithm \ref{approx0} is exactly the approximation ratio of MAX TOUR-TSP.}
\end{cor}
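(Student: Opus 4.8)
The plan is to leverage Theorem \ref{approx1ratio}, which already settles the single-pass case, and to exploit the \emph{independence} of the passes in Algorithm \ref{approx0}. First I would fix notation: let $\rho$ denote the approximation ratio of MAX TOUR-TSP (with $\rho \le 1$, as for max variants), and let the input $C = \{<b_1,f_1>,\ldots,<b_c,f_c>\}$ index the $c$ passes. For pass $i$, write $\Gamma_i$ for the candidate set returned by Algorithm \ref{approx1} and $\text{OPT}_i$ for the score of the optimal single-pass candidate set (the max SN output of Definition \ref{msn}) on that pass. Since max multi-pass SN requires every pass to independently satisfy Definition \ref{msn}, and since the ordering chosen within one pass places no constraint on any other pass (each pass re-sorts $R$ from scratch using its own $b_i$), the optimal multi-pass score decomposes additively as $\text{OPT}_{\text{multi}} = \sum_{i=1}^{c}\text{OPT}_i$.

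Next I would establish the lower bound. By Theorem \ref{approx1ratio}, each invocation of Algorithm \ref{approx1} satisfies $\text{score}(\Gamma_i) \ge \rho\,\text{OPT}_i$. Summing over the $c$ passes and applying the additive decomposition gives $\sum_{i=1}^{c}\text{score}(\Gamma_i) \ge \rho\sum_{i=1}^{c}\text{OPT}_i = \rho\,\text{OPT}_{\text{multi}}$, so the multi-pass procedure is a $\rho$-approximation.

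For the matching direction, i.e.\ to show the ratio is \emph{exactly} $\rho$ and cannot be improved in the worst case, I would observe that $c=1$ is a legal instance of Algorithm \ref{approx0} on which the procedure coincides with Algorithm \ref{approx1}, whose ratio is exactly $\rho$ by Theorem \ref{approx1ratio}. Hence no worst-case guarantee strictly better than $\rho$ is attainable for the multi-pass procedure, and together with the lower bound the ratio is exactly $\rho$.

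The main obstacle is a bookkeeping subtlety rather than a conceptual one: the final candidate set is a set \emph{union}, so a pair sharing a BKV under two different blocking keys could appear in two passes, and the heuristics $f_i$ may assign it different scores. I would resolve this by taking the multi-pass objective to be the sum of per-pass scores (each $\Gamma_i$ scored under its own $f_i$), which is the natural reading of ``each pass meeting Definition \ref{msn}'' and which makes the objective decompose additively as above. Under the alternative set-theoretic scoring one must instead argue that the shared-pair contributions cannot degrade the ratio, which I would handle by invoking the locality of each $f_i$ to confine its nonzero contributions to within-block pairs of that pass.
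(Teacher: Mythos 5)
Your proposal is correct and follows essentially the same route as the paper, which simply declares the corollary self-evident from the pseudocode of Algorithm \ref{approx0} together with Theorem \ref{approx1ratio}: you apply the single-pass guarantee to each of the $c$ independent passes, sum, and use the $c=1$ case for tightness. Your extra care about the union-versus-sum scoring of pairs appearing in multiple passes addresses a bookkeeping point the paper glosses over entirely, and your resolution (per-pass scoring under each pass's own $f_i$, with locality confining cross-pass interference) is consistent with the paper's definition of max multi-pass SN.
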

The proof of the corollary is self-evident when we consider the pseudocode of Algorithm \ref{approx0} and Theorem \ref{approx1ratio}.
\begin{algorithm}[t!]
\caption{MapReduce-based Traditional Blocking}
\begin{spacing}{0.9}
\begin{algorithmic}
\STATE \textbf{MAP:} 
\STATE \textbf{Input :} \begin{itemize}
\item Set $R$ of records
\item Blocking key $b$
 \end{itemize}
\STATE \textbf{Output :} \begin{itemize}
\item Set of key-value pairs of form $<b(r),r>$ where $b(r)$ is the BKV of record $r$  
 \end{itemize}
\STATE \textbf{Method :}
\begin{enumerate}
\FORALL {$r \in R$}
\STATE Emit $<b(r),r>$
\ENDFOR
\end{enumerate}
\STATE \textbf{REDUCE:} 
\STATE \textbf{Input :} \begin{itemize}
\item Key-value \emph{set} of form $<y,R_y>$ with the set $R_y$ contains exactly those records with BKV $y$
\item Scoring heuristic $f$
 \end{itemize}
\STATE \textbf{Output :} \begin{itemize}
\item Set of key-value pairs of form $<s,q>$ where $s$ is a double-valued score of unordered record pair $q=\{r,s\}$  
 \end{itemize}
\STATE \textbf{Method :}
\begin{enumerate}
\STATE $G:=RecordsToGraph(R_y,f)$
\STATE Call MAX TOUR-TSP on $G$
\STATE Call \emph{TourToList} on TSP output to get list of records $R_y^l$
\STATE Run block-merge procedure on $R^l$ with $w=2$
\FORALL {pairs $\{r,s\}$ output by block-merge procedure}
\STATE Emit $<f(r,s),\{r,s\}>$
\ENDFOR
\end{enumerate}
\end{algorithmic}
\end{spacing}
\label{approx2}
\end{algorithm}

\subsection{MapReduce-based Traditional Blocking}\label{app2}
The pseudocode for MapReduce-based traditional blocking is shown in Algorithm \ref{approx2}. In the mapper, the blocking key $b$ is applied on each record and the key-value pair $<b(r),r>$ is emitted. This implies that, after the shuffling step, all records with the same BKV end up in the same reducer. Thus, each reducer \emph{instance} processes a single block. Steps 1-5 in each reducer instance mirror the for loop in line 8 of Algorithm \ref{approx1}. Finally, the block-merge procedure, which we described earlier as sliding a constant-sized window $w$ over the records in the block and generating the candidate set thereof, is run on each list output by the TSP subroutine. As in the rest of this section, $w=2$. 

Given that Algorithm \ref{approx2} is designed for traditional blocking and not Sorted Neighborhood, it does not make any difference whether $f$ is local or global. Note that the proof of Theorem \ref{approx1ratio} can be used, with trivial modifications, to prove that the approximation ratio for Algorithm \ref{approx2} is exactly that of MAX TOUR-TSP. 
Finally, the reducer emits key-value pairs of form $<f(r,s),\{r,s\}>$. 

A rigorous analysis of Algorithm \ref{approx2} is not possible without making some assumptions about the available number of reducer instances, as well as the load-balancing strategies of the namenode\footnote{This is the master node that dynamically controls the MapReduce workflow \cite{mapreduce}}. For the sake of analysis, assume that the set $R$ of records is sharded equally on $h_m$ map nodes. Each mapper instance then takes time $O(t(b)|R|/h_m)$. Using notation from the previous sub-section, assume that $u$ distinct BKVs (and hence, blocks) are generated.

If we now assume that at least $u$ reducer instances are available, and the namenode distributes the load equally, the run-time of the reduce phase will be dominated by the \emph{largest} block generated, since this will be the last reducer to terminate. If a uniform distribution on BKVs is assumed, all blocks are of equal size and contain $|R|/u$ records. Using the analysis from the previous sub-section, the total time taken by Algorithm \ref{approx2} is then $O(t(b)|R|/h_m+(|R|/u)^2t(f)+(|R|/u)^q+|R|/u)$. The last term is due to the block-merge procedure and is asymptotically subsumed. The final expression is $O(t(b)|R|/h_m+(|R|/u)^2t(f)+(|R|/u)^q)$.

Assuming Zipf distribution, the largest block size is $|R|/H_u$ with $H_u$ defined earlier as the partial u-term harmonic sum. The run-time of Algorithm \ref{approx2} with Zipf distribution of BKVs is $O(t(b)|R|/h_m+(|R|/H_u)^2t(f)+(|R|/H_u)^q+|R|/H_u)=O(t(b)|R|/h_m+(|R|/H_u)^2t(f)+(|R|/H_u)^q)$.  

\subsection{Multi-pass Sorted Neighborhood with global scoring heuristics}\label{app3}
Unfortunately, there is no evidence yet that an approximation algorithm with constant approximation ratio exists for Sorted Neighborhood with a global scoring heuristic. Given the similarity of the problem to generalized TSP \cite{setTSP}, it could be the case that no such algorithm exists unless $P=NP$. We pose this as a conjecture in Section \ref{futurework}. 

One possible option is to use Algorithm \ref{approx1}, but ignore the fact that $f$ is not local. Theorem \ref{approx1ratio} would no longer apply, but it is reasonable to assume the algorithm will still perform well empirically. The question then is if we can optimize the algorithm further, given that we know $f$ is global and not local.  

Algorithm \ref{approx3} shows the pseudocode for a single-pass Sorted Neighborhood procedure that attempts two optimizations. The algorithm assumes a global scoring heuristic. Note that the only difference in the multi-pass procedure in Algorithm \ref{approx0} for the case of global $f$ is that it would invoke Algorithm \ref{approx3} instead of Algorithm \ref{approx1} in line 2. 

The first optimization is that a \emph{modified} version of Algorithm \ref{approx1} is run to obtain the final list $R^l$. The modification relates to the list polarities of each of the lists returned by the \emph{TourToList} subroutine. Recall that the subroutine has two list choices (denoted as \emph{polarities}) for each Hamiltonian circuit output by MAX TOUR-TSP. The list polarity did not matter if $f$ was local, but because a global $f$ implies that record pairs straddling blocks can have non-zero scores, the polarity of each list matters. For the first block, let \emph{TourToList} randomly return one of the two choices. Next, assuming $u$ blocks, let the $i^th$ block ($i$ ranging from 1 to $u-1$) have record $r$ at the end. Let the $i+1^{th}$ block have \emph{endpoint}\footnote{Technically, the records corresponding to the vertices preceding and following the dummy vertex in the returned Hamiltonian circuit} records $s_1$ and $s_2$. If $f(s_1,r)>f(s_2,r)$, \emph{TourToList} returns the list $(s_1,\ldots,s_2)$, otherwise it returns the reversed list.

With this modification in place, Algorithm \ref{approx1} is run from lines 1-9, and a second optimization called \emph{greedy adjacent swapping} is conducted on the list $R^l$.  To understand the optimization, consider the $i^{th}$ block and the $i+1^{th}$ block ($i$ ranging from 1 to $u-1$), and let the \emph{first} record of the $i+1^{th}$ block be $s$ and the \emph{last} record of the  $i^{th}$ block be $r$. Let the record $r'$ in the $i^{th}$ block have highest score (according to scoring heuristic $f$) when paired with $s$, compared to all other records in the $i^{th}$ block. If $r \neq r'$, swap $r$ and $r'$ if the resulting increase in score ($f(r,s')-f(r,s)$) due to the swap is greater than the (possible\footnote{Because of the approximate nature of the solutions, there is always a small chance that the swap will end up \emph{increasing} the 2-score of that block, which gives us all the more reason to perform the swap}) loss in the local 2-score of the $i^{th}$ block. In the forward pass, $i$ ranges from 1 to $u-1$ (the first to the penultimate block). The backward pass is similar but starts from Block $u$ and goes traverses upwards through the list to Block 2. Each of these two passes yields two different lists $R_f^l$ and $R_b^l$ with their own w-scores. Of the three lists, $R^l$, $R_f^l$ and $R_b^l$, the list with the highest 2-score is output (line 7).

The reason why all three lists must be compared is because greedy adjacent swapping can theoretically lead to a \emph{decline} in the original w-score. Figure \ref{decline} in the Appendix proves this through a construction. 

With uniform BKV distribution, each swap takes time $O(|R|/u)$ since $|R|/u$ records in block $i$ must be compared with the first record in the next block (assuming forward pass). Since $i$ ranges from 1 to $u-1$, the forward pass takes time $O(t(f)|R|(u-1)/u)$ over the time taken by Algorithm \ref{approx1}. Determining list polarity takes time $O(t(f))$ since only two comparisons are required; we assume that it is subsumed by the swapping procedure. 
Similarly, an upper bound of $O(t(f)|R|(u-1)/H_u)$ should be added to the run-time of Algorithm \ref{approx1}, if a Zipf distribution of BKVs is assumed. 

We note that Algorithm \ref{approx3} is amenable to numerous practical optimizations, including caching of scores\footnote{Since it is quite conceivable that many pairs will end up getting scored more than once, as the algorithm evaluates greedy adjacent swapping} and a multi-threaded implementation for each of the forward and backward passes. We leave investigating and evaluating such optimizations for future work.
\begin{algorithm}[t!]
\caption{Single-pass Sorted Neighborhood, global $f$}
\begin{spacing}{0.9}
\begin{algorithmic}
\STATE \textbf{Input :} \begin{itemize}
\item Set $R$ of records
\item Global scoring heuristic $f$
\item Blocking key $b$
 \end{itemize}
\STATE \textbf{Output :} \begin{itemize}
\item Candidate set of pairs $\Gamma$  
 \end{itemize}
\STATE \textbf{Method :}
\begin{enumerate}
\STATE Run \emph{modified} Algorithm \ref{alg1} from lines 1 to 9 with same inputs, to get list $R^l$ 
\STATE Run merge procedure on $R^l$ to get $\Gamma_1$ with score $F_1$
\STATE Perform \emph{greedy adjacent swapping} in a forward pass on $R^l$ to get $R_f^l$
\STATE Run merge procedure on $R_f^l$ to get $\Gamma_2$ with score $F_2$
\STATE Perform \emph{greedy adjacent swapping} in a backward pass on $R^l$ to get $R_b^l$
\STATE Run merge procedure on $R_b^l$ to get $\Gamma_3$ with score $F_3$
\STATE Output as $\Gamma$ the highest-scoring of $\Gamma_1$, $\Gamma_2$ and $\Gamma_3$ with ties broken in that order
\end{enumerate}
\end{algorithmic}
\end{spacing}
\label{approx3}
\end{algorithm}
\subsection{Practical Usage}
Table \ref{algos} lists the three proposed algorithms with run-times for both uniform and Zipf distributions. As noted earlier, the run-time of the multi-pass procedure may be calculated by weighting, if every blocking key either follows a uniform or Zipf distribution. If another distribution is expected, a similar analysis would first have to be carried out for Algorithm \ref{approx1} and the weighting procedure extended appropriately. Finally, note that although we implicitly assume that I/O or shuffling (in the case of Algorithm \ref{approx2}) costs are subsumed in the derived $O$ expression, these costs could be prohibitive for specific databases or implementations and must be separately derived, if this is the case. In the original papers, I/O costs of Sorted Neighborhood were experimentally found not to dominate \cite{stolfo}, \cite{stolfo98}.      

In the context of the full record linkage procedure, a blocking method in Table \ref{algos} should only be selected if its run-time plus $O(t(g)|R|)$ has a strict upper bound $o(t(g)|R|^2)$ where $g$ is the sophisticated similarity function used in the second step. In the last decade, with machine learning procedures, genetic algorithms and expressive feature spaces dominating the state-of-the-art \cite{marlin}, \cite{datamatching}, $g$ has become increasingly expensive. We hypothesize that, with efficient, practical implementations of the proposed algorithms and careful selection of blocking keys and heuristics, the methods will prove to be qualitatively and computationally viable. As an additional advantage, improvements in max TSP will contribute directly to the quality of the proposed algorithms.
\begin{table*}
\caption{A summary of the proposed algorithms. All algorithms accept as input a set $R$ of records, a scoring heuristic $f$ and a blocking key $b$, and output a candidate set $\Gamma$ of size $O(|R|)$. $t$ denotes the run-time (per invocation) of its functional input, $u$ is the number of blocks generated by $b$, $H_u$ is the u-term partial harmonic sum, $h_m$ is the number of map nodes and $q$ is the TSP constant}
    \begin{tabular}[t]{ |p{1.7cm}| p{7.2cm} | p{7.7cm} |}
    \hline
    {\bf Section/Alg.} & {\bf Run-time: Uniform BKV distr.} & {\bf Run-time: Zipf BKV distr.}\\ \hline
   \ref{app1}/\ref{approx1} &$O(u((|R|/u)^2t(f)+(|R|/u)^q)+(t(b)+1)|R|)$ & $O(t(f)(|R|/H_u)^2\sum_{i=1}^{u}1/i^2+(|R|/H_u)^q\sum_{i=1}^{uh}1/i^q+(t(b)+1)|R|)$\\ \hline
 \ref{app2}/\ref{approx2} & $O(t(b)|R|/h_m+(|R|/u)^2t(f)+(|R|/u)^q)$ & $O(t(b)|R|/h_m+(|R|/H_u)^2t(f)+(|R|/H_u)^q)$\\ \hline
 \ref{app3}/\ref{approx3} &Algorithm \ref{approx1}+$O(t(f)|R|(u-1)/u)$ & Algorithm \ref{approx1}+$O(t(f)|R|(u-1)/H_u)$\\ \hline
    \end{tabular}
\label{algos}
\end{table*}
\section{Conjectures and Conclusion}\label{futurework}
This paper shows that devising a maximum-performing Sorted Neighborhood algorithm entails solving an NP-complete w-ordering problem. There is a close connection between 2-ordering and TSP. This connection is used to  define and analyze three approximation algorithms for the special but practically important case of $w=2$. In the future, we will implement and evaluate these algorithms and attempt experimentally viable solutions for $w>2$. We state the following \emph{conjectures}:\begin{itemize}
\item For an arbitrary global heuristic $f$, no polynomial-time $\rho$-approximation algorithm can exist for approximating max SN unless $P=NP$. 
\item For $w>2$ and an arbitrary local $f$, any polynomial-time (in $|R|$) $\rho$-approximation algorithm is exponential in $w-1$.
\end{itemize}
Proving (or disproving) these conjectures will have direct ramifications on the \emph{approximability} of the generic w-ordering problem.
\section*{Acknowledgements}
The authors thank Vijaya Ramachandran and Matthew Johnson for their guidance on Theorems \ref{2optim} and \ref{coptim} respectively.
\bibliography{pods}
\bibliographystyle{IEEEtran}
\newpage
\section*{Appendix}
\subsection{Proof of Theorem \ref{claim1}}
Recall that there is a list of $u$ blocks, $<R_{y_1},\ldots,R_{y_u}>$, where each block is a set of records. We need to show that if each block is maximum-score w-ordered and the scoring heuristic $f$ is local, then the ordered list of records is both necessary and sufficient for max SN. 

From Section \ref{windowing}, the \emph{w-score} of any such list is the \emph{score} of the candidate set $\Gamma$ generated after the list is subject to the w-window merge step. Per Definition \ref{msn}, max SN will always perform merge on a list that guarantees maximum w-score, compared to any other list that obeys SN semantics. Let such a list, $R^l$, be denoted as the \emph{max} list. We assume in this proof that the total number of records in $R$ is strictly greater than $w$, since otherwise, every list would yield the same w-score and max SN would be trivially achieved.

When the merge step commences on the \emph{max} list, all records within a window of size $w$ are paired and added to the candidate set $\Gamma$. There is an alternate way of characterizing $\Gamma$. Specifically, \emph{partition} $\Gamma$ into a set of (at most) $2u$ sets, two for each block. Let the $i^{th}$ block contribute two disjoint sets $\Gamma_i$ and $\Gamma_i^0$. The reason for the 0 superscript will become clear shortly.
 
The partition is achieved by the following construction:
\begin{enumerate}
\item If both records in a pair $\{r,s\} \in \Gamma$ are from the same block $R_{y_i}$, place $\{r,s\}$ in set $\Gamma_i$.
\item If the records in a pair $\{r,s\} \in \Gamma$ are from different blocks (say $r \in R_{y_i}$ and $s \in R_{y_j}$), then add $\{r,s\}$ to $\Gamma_i^0$ if $i < j$ otherwise add the pair to $\Gamma_j^0$. 
\end{enumerate} 

 Some of the sets may be empty; we remove them from the partition if so, and thereby fulfill the conditions of a partition. Using the partition:
\begin{equation}\label{gamma}
\Gamma=\bigcup_{i=1}^u \Gamma_i \cup \bigcup_{i=1}^u \Gamma_i^0
\end{equation}
By Definition \ref{f} of the local scoring heuristic, the score of each pair in $\Gamma_i^0$ is 0 (hence, the superscript). Thus, the score of $\Gamma$ is only contributed to by the first term in Equation \ref{gamma}. Since each of the sets operated upon by union is disjoint, the following equation is directly derived:
\begin{equation}\label{scoregamma}
score(\Gamma)=\sum_{i=1}^u score(\Gamma_i)
\end{equation}
Again, because of disjointness, taking the max on both sides allows us to move the max inside the summation in Equation \ref{scoregamma}. By the semantics of Algorithm \ref{alg1} and Definition \ref{wordering} of maximum-score w-ordering, \emph{max} $score(\Gamma_i)$ exactly equals the maximum w-score achievable over all orderings of the block $R_{y_i}$. Thus, the condition that every block is independently maximum-score w-ordered is a \emph{sufficient} one for maximizing the score of $\Gamma$. 

We prove necessity by contradiction. Suppose some block $R_{y_i}$ is not maximum-score w-ordered. Let its list version (of which the w-score will not be maximized per Algorithm \ref{alg1}) be $R^l_{y_i}$. Let the candidate set generated as a result be $\Gamma'$, and assume $\Gamma$ to be the generated candidate set if $R_{y_i}$ were maximum-score w-ordered. We assume $\Gamma'$ was generated from a max list and show that this leads to a contradiction.
Using Equation \ref{scoregamma} and the construction of the partition, the difference between the scores of $\Gamma$ and $\Gamma'$ will be $score(\Gamma_i)-score(\Gamma'_i)$. Since $R_{y_i}$ was maximum-score w-ordered in the list that generated $\Gamma$ but not $\Gamma'$, the difference is positive. If we replace list $R^l_{y_i}$ with another list that achieves higher w-score, the score of $\Gamma'$ will monotonically improve. This proves the contradiction, since $\Gamma'$ was clearly not generated from a max list.

Thus, if any block is non-maximum-score w-ordered, the resulting global list of records will not be a max list. The \emph{contrapositive} of the statement proves necessity. Coupled with the proof for sufficience, the full statement of the theorem is proved.  

\subsection{Proof of Theorem \ref{coptim}}
Theorem \ref{2optim} showed that maximum-score 2-ordering was NP-complete; hence, it suffices to show a Karp reduction from the decision version of maximum-score 2-ordering on a set $R$ of records. Let $|R|=m$; $R$ consists of records $\{r_1,\ldots,r_m\}$. We also assume that $m>>w$.
 
We begin by constructing a new set $S$ of $m(w-1)$ records, by mapping each record $r_i \in R$ to a set $S_i$ which contains $w-1$ records. Let $S_i$ be denoted as an \emph{internal set} and $i$ as the \emph{set ID} of $S_i$. Also, let each record in this set be denoted as $s_i^j$, where $j \in \{1, 2,\ldots, w-1\}$ and is the \emph{internal ID} of the record. $S$ is simply the union of all the $m$ internal sets that the records were mapped to.

Technically, such a construction can be achieved by assigning $S$ a schema containing just two attributes (called set ID and internal ID). Each record in $S$ can be uniquely identified by employing both IDs in conjunction; hence, both IDs together constitute a compound primary key. 

We obtain a new scoring heuristic $f'$ for record pairs in $S$ using the following construction: 
\begin{enumerate}
\item $f'(s_i^c,s_i^d)=\frac{1}{2}\sum_{u=1}^{m}\sum_{v=1}^m f(r_u,r_v)$, $\forall i \in \{1,\ldots,m\}$ and $\forall c,d \in \{1,\ldots,w-1\}$
\item $f'(s_i^c,s_j^c)=f(r_i,r_j)$, $\forall  i,j \in \{1,\ldots,m\}, i \neq j$ and $\forall c \in \{1,\ldots,w-1\}$
\item $f'=0$ for all other record pairs 
\end{enumerate}

Intuitively, Rule 1 uniformly assigns the same large score to the record pairs that lie within the same internal set, independent of the specific internal or set IDs of records\footnote{This is evident when we consider that the variables on the left hand side of Rule 1 are independent of the variables on the right hand side}. If we visualize the scoring heuristic $f$ as a discrete $m \times m$ matrix (but with diagonal entries undefined; see Definition \ref{f}), the quantity on the right hand side is simply the sum of all (defined) matrix entries. The factor of 1/2 is to prevent each entry from being counted twice, due to the symmetry of the two summations.

Rule 2 shows that a non-zero score can only exist between the $c^{th}$ records of two different internal sets (with $c$ ranging from 1 to $w-1$) and is equal to the score between the original records that were mapped to the two sets in the construction. Note that the score itself is independent of the value of $c$. Rule 3 assigns every other record pair in the scaled problem a score of zero. 

Thus, Rule 1 applies when two records belong to the same internal set (and have the same set ID), while Rule 2 applies when two records have different set IDs but the same internal ID. Otherwise, Rule 3 applies. 

The computation of $f'$ occurs in polynomial time since $w$ is a constant and computing pairwise scores for $m(w-1)$ records is at most $O(t(f)m^2(w-1)^2)$. By definition, $t(f)$ is polynomial in $m$ and $w-1$ is constant. 

Finally, recall that the decision version of maximum-score 2-ordering accepted as input a \emph{decision constant} $k$. A valid solution must return \emph{True} if a list with 2-score at least $k$ exists, where $k$ is a positive integer. Construct the decision constant $k'$ of the transformed problem instance using the equation below:
\begin{equation}\label{k'}
k'=T_1+T_2
\end{equation}
Here, $T_1$ and $T_2$ are given by the equations below:
\begin{equation}\label{t1}
T_1=m{w-1 \choose 2}\frac{1}{2}\sum_{u=1}^{m}\sum_{v=1}^m f(r_u,r_v)
\end{equation}
\begin{equation}\label{t2}
T_2=k(w-1)
\end{equation}
We perform maximum-score w-ordering on this transformed problem instance, with the inputs being the constructed set of records $S$, the scoring heuristic $f'$ and the decision constant $k'$. We claim that the (\emph{True} or \emph{False}) output of the oracle solving the transformed problem is exactly the output of the original problem instance. In other words, we claim a correct Karp reduction.

We start by showing the correctness of the Karp reduction for the \emph{False} output. We prove by contradiction that if the oracle return \emph{False}, then it cannot be the case that an ordering of the original set $R$ of records exists, with 2-score at least $k$, assuming the score heuristic $f$. 

Suppose not. That is, the oracle returned \emph{False} but there is some ordering $R^l$ that has 2-score at least $k$. Without loss of generality, let $R^l=<r_1,\ldots,r_m>$. Since this list has 2-score at least $k$, the following is true: 
\begin{equation}\label{kclaim}
\sum_{i=1}^{m-1} f(r_i,r_{i+1}) \geq k. 
\end{equation}
Given this information, consider (in the transformed problem) the list $S^l$ formed by concatenating the \emph{sub-lists} $S_1^l,\ldots , S_m^l$, where each $S_i^l$ is simply $<s_i^1,\ldots ,s_i^{w-1}>$ for all $i$ ranging over set IDs (that is, from 1 to $m$). Let us calculate the w-score of this list. 

Since the window has size $w$ while a sub-list has $w-1$ records, it will be the case that each sub-list $S^l$ will fall entirely within some window, and therefore, all records within a sub-list will be paired and added to the (transformed problem) candidate set $\Gamma'$. Therefore, each sub-list, by itself, contributes exactly ${w-1 \choose 2}$ pairs. Given Rule 1 in the construction of $f'$ and that there are exactly $m$ sub-lists, the expression $T_1$ (Equation \ref{t1}) will be the score of all record pairs in $\Gamma'$ such that both records in the pair are from the same sub-list.
\begin{figure}[t]
\includegraphics[width=8cm, height=2cm]{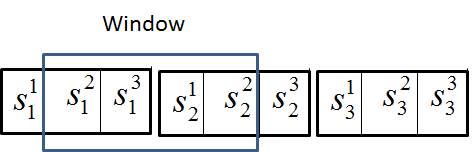}
\caption{An illustration of three concatenated sub-lists, with $w=4$. Looking at the window, the only record pair from different internal sets that can contribute a non-zero score is $\{s^2_1,s^2_2\}$}
\label{alignment}
\end{figure}

Also, because of the way each sub-list is defined, it must be the case that in \emph{every} window, exactly one pair of the form $\{s_i^c,s_{i+1}^c\}$ can contribute a non-zero score, among record pairs where both records are from different sub-lists. Consider Figure \ref{alignment} for an illustration, with $m=3$ and $w=4$. Since $c$ ranges from 1 to $w-1$ and $i$ ranges from 1 to $m-1$, such pairs will contribute total score (using Rule 2):
\begin{equation}\label{t'2}
T'_2=(w-1) \sum_{i=1}^{m-1} f(r_i,r_{i+1})
\end{equation}
By Equations \ref{t2}, \ref{kclaim} and \ref{t'2}, $T'_2 \geq T_2$. But this implies that the w-score of the constructed list $S^l \geq T_1 +T_2$, which implies that the w-score is at least $k'$, by Equation \ref{k'}. This leads to a contradiction, since the oracle returned \emph{False}. Thus, it must be the case that if a list with 2-score at least $k$ exists in the original instance, a list with 2-score at least $k'$ exists in the transformed instance. By the contrapositive, the reduction is correct, assuming the oracle returns \emph{False}. 
\begin{figure}[t]
\includegraphics[width=7cm, height=1cm]{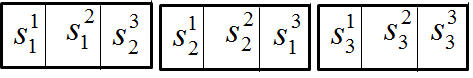}
\caption{An illustration of an interleaved list. Intuitively, the list is interleaved because records $s^3_2$ and $s^3_1$ are not `lined up' with other records from their respective internal sets}
\label{interleaved}
\end{figure}

In order to show the correctness of the oracle for a \emph{True} output, we introduce some additional terminology. Define a list $S^l$ to be an \emph{interleaved} list if there exist distinct records $s_i^c$ and $s_i^d$ in the list, such that the list contains, between these records, at least one record of the form $s_j^e$ where $i,j \in \{1,\ldots ,m\}$, $c,d,e \in \{1,\ldots ,w-1\}$ and $i \neq j$. Let a list that is not \emph{interleaved} be defined as a \emph{stacked} list. For example, the list in Figure \ref{alignment} is stacked. The list in Figure \ref{interleaved} is interleaved. 

Intuitively, a list is interleaved if records from some internal set $S_i$ are not all \emph{lined up} against one another. Place each of $|S|!$ orderings of $S$ into either the set of interleaved orderings $\mathcal{I}^l$ or of stacked orderings $\mathcal{S}^l$. Together, $\mathcal{I}^l$ and $\mathcal{S}^l$ form a partition of the set of all orderings. Note that a stacked ordering is simply the concatenation of sub-lists, where each sub-list is of form $S^l_i$, one of $(w-1)!$ possible orderings of internal set $S_i$. 

\begin{figure}[t]
\includegraphics[width=7cm, height=1cm]{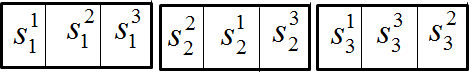}
\caption{An example of a list that is stacked but unaligned, since the order of internal IDs in the first sub-list is 1,2,3 but in the second sub-list is 2,1,3 (and in the third, 1,3,2). The list would be unaligned even if just one of its sub-lists were `out of alignment'}
\label{unaligned}
\end{figure}
Furthermore, define a stacked ordering to be \emph{aligned} if the internal ID of the $c^{th}$ record in every sub-list (with $c$ ranging from 1 to $w-1$) is the same. If a stacked ordering is not aligned, let it be denoted as unaligned. Figure \ref{alignment} is an example of a stacked aligned ordering, while Figure \ref{unaligned} is an example of a stacked unaligned ordering. The set $\mathcal{S}^l$ is then partitioned into the sets $\mathcal{S}_a^l$ and $\mathcal{S}_u^l$ of stacked aligned and unaligned stacked orderings respectively.

Define the alignment function to be a function that takes a stacked unaligned ordering as input and aligns it by using a particular sub-list as a \emph{pivot}. Thus, the output is a stacked aligned ordering. For example, Figure \ref{alignment} is the alignment of Figure \ref{unaligned} if we use the \emph{first} sub-list as the pivot. Specifically, the function \emph{rearranges} the records in every sub-list except the pivot sub-list, such that the order of internal IDs in every sub-list now reflects the order of internal IDs in the pivot sub-list. Given that there can be at most $m$ distinct pivot sub-lists for an unaligned ordering, the alignment function can yield at most $m$ aligned orderings for a given input. We call this set of (at most $m$) possible alignments (for a given unaligned ordering $S_u^l$) the \emph{alignment set} of $S_u^l$.  

Using the concepts defined above, we state the following properties:
\begin{prop}\label{prop1}
$\forall I^l \in \mathcal{I}^l, \forall S^l \in \mathcal{S}^l$, w-score($ I^l$) $\leq$ w-score($S^l$). 
\end{prop}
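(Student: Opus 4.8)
The plan is to reduce the comparison to the two kinds of score that Rules 1 and 2 of the $f'$ construction can contribute. First I would fix an arbitrary list and partition the pairs produced by the merge step (equivalently, by the semantics of Algorithm \ref{alg1}, all pairs of records lying within distance $w-1$ in the list) into three classes: \emph{intra-set} pairs $\{s_i^c,s_i^d\}$ drawn from a common internal set, each contributing the large Rule-1 weight $L:=\tfrac12\sum_{u,v}f(r_u,r_v)$; \emph{aligned inter-set} pairs $\{s_i^c,s_j^c\}$ with matching internal ID, each contributing $f(r_i,r_j)$ by Rule 2; and all remaining pairs, which contribute $0$ by Rule 3. Thus the w-score of any list splits as $L\cdot N_{\mathrm{intra}}+\Sigma_{\mathrm{inter}}$, where $N_{\mathrm{intra}}$ counts captured intra-set pairs and $\Sigma_{\mathrm{inter}}$ sums the captured aligned inter-set weights.

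Next I would evaluate the two terms on a stacked list. Because a stacked list concatenates the $m$ internal sets, each set occupies $w-1$ consecutive positions, so all $\binom{w-1}{2}$ of its internal pairs fall inside a single window and are captured; hence $N_{\mathrm{intra}}=m\binom{w-1}{2}$ and the intra term equals $T_1$ (Equation \ref{t1}), its maximum possible value, independently of which stacked list is chosen. Since $\Sigma_{\mathrm{inter}}\ge 0$, every stacked list scores at least $T_1$. The core of the argument is then to show that an interleaved list cannot exceed this: the claim is that interleaving forces some internal set to span more than $w-1$ positions, so at least one of its intra-set pairs is \emph{not} captured, dropping $N_{\mathrm{intra}}$ by at least one and costing at least $L$; and that this loss cannot be recovered through $\Sigma_{\mathrm{inter}}$, exploiting the fact that $L$ was deliberately set to the \emph{entire} sum $\tfrac12\sum_{u,v}f(r_u,r_v)$ of the pairwise scores, so it dominates the aligned inter-set weights that an interleaving can expose.

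I expect this last step to be the main obstacle, for two related reasons. First, one must argue with care that interleaving genuinely sacrifices an intra-set pair rather than merely reshuffling records within a width-$w$ window, since a set of $w-1$ records can absorb a single foreign insertion and still fit inside one window; the accounting therefore has to locate the position at which a set's span provably exceeds $w-1$. Second, the inter term of an interleaved list can aggregate aligned contributions across several internal IDs, so a crude bound such as $\Sigma_{\mathrm{inter}}\le (w-1)L$ is too weak to beat $L$, and the comparison must be made against a stacked list rather than against $T_1$ in isolation, so that the inter contributions which survive de-interleaving cancel and only the genuine net loss is charged. Concretely, I would make this precise with an exchange argument: starting from the interleaved list, repeatedly move a stray record back beside the other members of its internal set, showing each adjacent move leaves $N_{\mathrm{intra}}$ nondecreasing while changing $\Sigma_{\mathrm{inter}}$ by an amount no larger than the $L$ thereby regained, until a stacked list of at least equal w-score is reached.
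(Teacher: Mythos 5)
Your decomposition of the w-score into an intra-set term $L\cdot N_{\mathrm{intra}}$ plus an aligned inter-set term $\Sigma_{\mathrm{inter}}$ is the right bookkeeping, and it follows the same route the paper sketches (the paper gives no technical proof of this property; it only asserts that in an interleaved list ``at least two records sharing the same set ID will not share a window at all'' and that $L$ is too large for Rule 2 to compensate). However, the obstacle you flag in your last paragraph is fatal rather than merely delicate: the claim that interleaving forces $N_{\mathrm{intra}}$ to drop is false. A set of $w-1$ records that absorbs a single foreign insertion occupies $w$ consecutive positions, its extreme members lie at distance exactly $w-1$, and all $\binom{w-1}{2}$ of its internal pairs are still captured by a width-$w$ window. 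For $w=3$, the list $\langle s_1^1,s_2^1,s_1^2,s_2^2,s_3^1,s_3^2,\ldots\rangle$ is interleaved yet captures every one of the $m\binom{w-1}{2}$ intra-set pairs, so $N_{\mathrm{intra}}$ is already maximal and no loss of $L$ is available to charge against. Since your entire argument reduces the property to ``interleaving costs at least one $L$, and $L$ dominates,'' it cannot be completed as proposed.

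In fact the property as stated, with its universal quantification over stacked lists, is false, so no proof exists to be found. Take $w=3$, $m=3$, $f(r_1,r_2)>0$. The pairs produced by Algorithm \ref{alg1} are exactly those at positional distance at most $w-1=2$. The interleaved list $B=\langle s_1^1,s_2^1,s_1^2,s_2^2,s_3^1,s_3^2\rangle$ then has w-score $3L+2f(r_1,r_2)+f(r_2,r_3)$, while the stacked but anti-aligned list $A=\langle s_1^1,s_1^2,s_2^2,s_2^1,s_3^1,s_3^2\rangle$ has w-score $3L+f(r_1,r_2)+f(r_2,r_3)$; the interleaved list strictly wins, and the example extends to any $w\ge 3$ and large $m$ via a single-insertion interleaving of the first two blocks. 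What the enclosing reduction in Theorem \ref{coptim} actually needs is only the weaker claim $\max_{I^l\in\mathcal{I}^l}\mbox{w-score}(I^l)\le\max_{S^l\in\mathcal{S}^l}\mbox{w-score}(S^l)$, i.e., that some stacked (aligned) ordering attains the overall maximum. Your exchange argument is the right tool for that weaker target, provided each de-interleaving move is compared against the best stacked completion (one whose alignment is inherited from the interleaved list) rather than against an arbitrary stacked list, so that any aligned inter-set weight destroyed by the move is also collected by the stacked target. As written, the proposal establishes neither the stated property nor this weaker substitute.
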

\begin{prop}\label{prop2}
$\forall S_u^l \in \mathcal{S}_u^l$, the w-score of $S_u^l$ will be no more than the w-score of any list in the alignment set of $S_u^l$. 
\end{prop}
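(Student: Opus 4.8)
The plan is to show that the $w$-score of any stacked ordering splits into a part that is invariant across all stacked orderings and a part that alignment can only increase. First I would observe that in a stacked ordering each sub-list is an entire internal set of $w-1$ records, so all $\binom{w-1}{2}$ intra-sub-list pairs fall inside a common window and, by Rule~1, each contributes the same large score; summed over the $m$ sub-lists this is exactly the term $T_1$ of Equation~\ref{t1}, independent of how the records \emph{inside} the sub-lists are permuted. Since the alignment function preserves the order of the sub-lists themselves and only rearranges records within sub-lists, both $S_u^l$ and every member of its alignment set share this same $T_1$. Consequently it suffices to compare the contribution of cross-sub-list pairs. Throughout I would write $S_{\sigma(1)},\ldots,S_{\sigma(m)}$ for the order in which the internal sets appear, which is common to $S_u^l$ and all of its alignments.

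Next I would pin down exactly which cross-sub-list pairs lie within a common window. Because each sub-list has $w-1$ records and the window has size $w$, a single window can meet at most two consecutive sub-lists; I would verify the boundary arithmetic to rule out windows straddling three sub-lists. Restricting attention to consecutive sub-lists $k$ and $k+1$, I would compute positions and show that the record in position $a$ of sub-list $k$ is paired with the record in position $b$ of sub-list $k+1$ exactly when $b \le a$. By Rule~2 such a pair scores the non-negative value $f(r_{\sigma(k)},r_{\sigma(k+1)})$ precisely when the two records carry the same internal ID, and scores $0$ by Rule~3 otherwise.

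The crux is then a clean counting argument. For each boundary I would let $N_k$ denote the number of internal IDs whose position in sub-list $k+1$ is at most its position in sub-list $k$, so that the cross contribution of that boundary equals $f(r_{\sigma(k)},r_{\sigma(k+1)})\,N_k$ and the total $w$-score is $T_1+\sum_{k=1}^{m-1} f(r_{\sigma(k)},r_{\sigma(k+1)})\,N_k$. Since the two internal orderings are permutations of the same position set, each internal ID contributes at most one counted pair, giving $N_k \le w-1$; moreover $N_k=w-1$ forces the two sub-lists to have identical internal orderings, which is exactly the aligned case. Every list in the alignment set of $S_u^l$ makes all sub-lists agree with the chosen pivot, so each such list attains $N_k=w-1$ for all $k$ and hence the common maximal cross value $(w-1)\sum_k f(r_{\sigma(k)},r_{\sigma(k+1)})$. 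Because $f\ge 0$, the unaligned list can only have smaller or equal $N_k$ at each boundary, so its $w$-score is no larger, which is the claim.

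I expect the main obstacle to be the position bookkeeping rather than the inequality itself: establishing with certainty that no window touches three sub-lists, and deriving the ``$b\le a$'' pairing condition, requires care with the off-by-one offsets induced by sub-lists of size $w-1$ against a window of size $w$. Once those pairings are fixed, the reduction to the per-boundary count $N_k$ and the bound $N_k\le w-1$ follow directly from the permutation structure and the non-negativity of $f$. A minor point to handle cleanly is that all pivots yield the same maximal cross value, so ``no more than the $w$-score of any list in the alignment set'' holds uniformly rather than only for a single distinguished alignment.
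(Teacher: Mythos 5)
Your proof is correct, and it is in fact considerably more complete than what the paper offers: the paper explicitly declines to give a technical proof of Property~2, offering only the remark that it ``may be proved by contradiction, by calling on Rules 2 and 3.'' Your route is a direct decomposition rather than a contradiction argument. The key steps all check out against the paper's construction: in a stacked ordering a pair of positions lands in $\Gamma$ exactly when the positions are at distance at most $w-1$, so all $\binom{w-1}{2}$ intra-sub-list pairs are captured (yielding the invariant $T_1$), no pair from sub-lists two or more apart is captured (minimum distance is $w$), and a pair at relative positions $a$ in sub-list $k$ and $b$ in sub-list $k+1$ is captured iff $b\le a$. Your per-boundary count $N_k$, the bound $N_k\le w-1$, the observation that $N_k=w-1$ forces identical internal orderings (via the equal-sum argument on the two permutations), and the uniformity over pivots together give exactly the claimed inequality, using only $f\ge 0$. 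What your approach buys over the paper's sketch is an explicit formula, $T_1+\sum_k N_k\, f(r_{\sigma(k)},r_{\sigma(k+1)})$, for the $w$-score of any stacked ordering; this simultaneously proves Property~2 and recovers, as the $N_k=w-1$ case, the value $T_1+T_2'$ that the paper computes separately for stacked aligned lists in the \emph{False}-direction of Theorem~\ref{coptim}. The only caution is the one you already flag: the off-by-one bookkeeping (sub-lists of length $w-1$ against windows of length $w$, plus the special final window of Algorithm~\ref{alg1}) must be done carefully, but it works out to the clean ``distance at most $w-1$'' characterization you rely on.
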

We do not provide technical proofs of these properties but they may be proved by contradiction, by calling on Rules 1 and 2 in the construction of $f'$. Specifically, if the first property is false, it can be shown that Rule 1 was incorrectly applied, while if the second property is false, Rule 2 was incorrectly applied. 

Intuitively, the first property holds because the score assigned to two records in the same internal set is `too large' for Rule 2 to compensate for it\footnote{Since, as we stated earlier, Rule 2 assigns only a specific entry in the f-matrix to an eligible pair, whereas Rule 1 assigns the sum of all entries in the f-matrix to an eligible pair. Recall that f was non-negative}. It is always the case, in an interleaved list, that at least two records sharing the same set ID will not share a window at all. The second property holds for a similar reason (but with relation to Rules 2 and 3), when comparing a stacked unaligned ordering to its aligned version.

We prove the correctness of the oracle for \emph{True} outputs by first observing that every stacked ordering (whether aligned or unaligned) is guaranteed to have w-score at least $T_1$ (Equation \ref{t1}), by an earlier part of the proof. 

If an ordering with w-score at least $k'$ is interleaved, then every stacked ordering also has score at least $k'$ (Property 1). Consider a specific stacked aligned ordering that is the concatenation of sub-lists $S_1^l, \ldots , S_m^l$ and with records in each sub-list $S^l_i$ ordered as $<s^1_i, \ldots ,s^{w-1}_i>$. We can follow the proof showing correctness of the oracle for \emph{False} outputs \emph{backwards} to \emph{derive} Equation \ref{kclaim}, which in turn, implies the existence of a 2-ordering with score at least $k$. 

The same proof can be employed, with only a change in notation, if the ordering is not interleaved but is stacked and aligned. If the ordering is stacked and \emph{unaligned}, we pick an element from the ordering's alignment set (which has a w-score at least as high, by Property 2) and conduct the analysis in the previous sentence. In either case, Equation \ref{kclaim} will be the end result. 

We conclude that the Karp reduction is correct. This proves the theorem that maximum-score w-ordering for any constant $w > 2$ is NP-complete. 

\subsection{Proof of Theorem \ref{approx1ratio}}
By Theorem \ref{claim1}, we know that maximum-score 2-ordering each block individually is necessary and sufficient for achieving max SN, since Algorithm \ref{approx1} assumes a local $f$. Suppose the w-score of the max list (the list on which max SN runs the merge step) is $\Phi^*$. Before performing 2-ordering, we have a list of $u$ blocks,  $<R_{y_1},\ldots,R_{y_u}>$, where each block is a set of records. Let the maximum score of a block $y_i$ be $\Phi^*[y_i], \forall i \in \{i,\ldots,u\}$. By Theorem \ref{claim1} and because $f$ is local, we have:
\begin{equation}\label{lemma}
\Phi^*=\sum_{i=1}^{u}\Phi^*[y_i]
\end{equation}
If we can prove an approximation ratio $\rho$ for each $\Phi^*[y_i]$, where $\rho$ is the approximation ratio of MAX TOUR-TSP, then Equation \ref{lemma} will prove the theorem. 

We drop the second subscript and consider an arbitrary block $R_y$ with $|R_y|=m$ records. In line 8 of Algorithm \ref{approx1}, the auxiliary subroutine \emph{RecordsToGraph} converts $R_y$ into a weighted, complete, undirected graph (with an extra \emph{dummy} vertex).  Consider again Figure \ref{convert}. Since the problem is to find a tour, we can use the dummy vertex, without loss of generality, as the first and last vertex in the \emph{returned} tour (by MAX TOUR-TSP) $<dummy,e_1,v_1,\ldots,v_{m},e_{m+1},dummy>$. We adopt the usual notation that record $r_i$ was bijectively mapped to vertex $v_i$, with $i$ ranging from 1 to $m$. 

The main observation is that, regardless of whether the tour is optimal or not, \emph{any} edge following or preceding the \emph{dummy} vertex will always be 0 by construction, and will not contribute anything to the total tour weight. The difference (between the optimal and sub-optimal tour weights) can only be due to the \emph{path} $<v_1,\ldots,v_{m}>$. Let the same path in the optimal tour be denoted as $path^*$. If the approximation ratio of the TSP algorithm is $\rho$, this implies that:
\begin{equation}
\frac{\sum_{e \in Edges(path)} W(e)}{\sum_{e \in Edges(path^*)} W(e)} \geq \rho
\end{equation} 
The \emph{TourToList} subroutine in Algorithm \ref{approx1} converts this inner path into the ordered list of records, as illustrated in Figure \ref{convert}(b). Thus, the numerator in the summation above is exactly $\Phi[y]$ and the denominator is $\Phi^*[y]$. This proves the theorem.

\subsection{Construction showing score-decline of greedy adjacent swapping}
\begin{figure}[t]
\includegraphics[width=5cm, height=5cm]{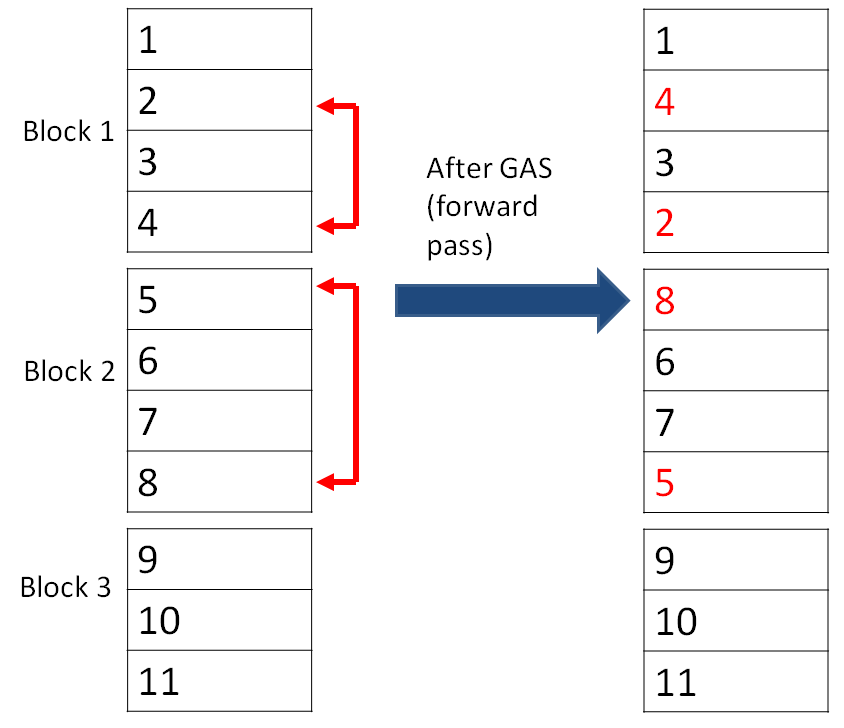}
\caption{A three-block construction showing that greedy adjacent swapping (GAS) can potentially lead to decline in score. $w=2$ is assumed, and it can be verified (using Figure \ref{gasmatrix}) that each block on the left hand side is maximum-score 2-ordered}
\label{decline}
\end{figure}
\begin{figure}[t]
\includegraphics[width=8cm, height=3cm]{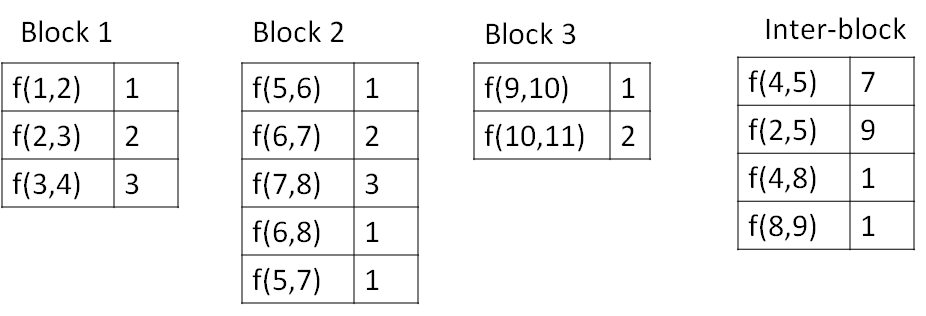}
\caption{The score matrix used in the construction in Figure \ref{decline}. Any scores not in the matrix evaluate to 0}
\label{gasmatrix}
\end{figure}

We show a construction that proves that greedy adjacent swapping (GAS), described in Section \ref{app3}, does not always improve scores. The construction is shown in Figure \ref{decline}. Although we only show the construction for forward pass, a symmetric case can be constructed for the backward pass.

Assuming $w=2$, the list on which the GAS procedure operates has each block maximum-score 2-ordered\footnote{As an additional detail, list \emph{polarities} are also correct, which is required by Algorithm \ref{approx3}. The correctness is evident since $f(4,5)>f(4,8)$ and $f(8,9)>f(8,11)$}, which may be easily verified in Figure \ref{decline} by noting the score matrices for each block in Figure \ref{gasmatrix}. The correctness of the GAS procedure itself can also be verified by using the given scores (in Figure \ref{gasmatrix}) between records belonging to different blocks (\emph{inter-block} scores in the figure). If we now perform merge and compute the candidate set score for the two lists in Figure \ref{decline} (before and after the GAS procedure was run), $\Gamma^{orig}$ is found to have score $6+7+6+1+3=23$, where we break up the score by each block's score (6, 6 and 3) and the scores contributed by records straddling adjacent blocks (7 and 1). Similarly, the candidate set score after GAS, $\Gamma^{GAS}$ is computed as $5+0+1+4+3=13$. The score has declined by a considerable margin.

To understand why such a decline occurred, consider the swap that took place in the second block. If after the first swap (of records 2 and 4 in Block 1), we would have computed the candidate set score, it would have increased, since the increase in score would have been $f(2,5)=9$ according to Figure \ref{gasmatrix} and the decline in the local 2-ordering score of Block 1 would only be $6-5=1$. The global decline occurs because the first and last records in Block 2 end up getting swapped in the next step, which is a valid step according to GAS semantics. Because the procedure is \emph{greedy}, it only takes into account the impact that the swap has on the local score of the \emph{current} block. In other words, the ramifications on \emph{previous} blocks are ignored. 

Given that such declines can occur, we are therefore justified in comparing all three lists in line 7 of Algorithm \ref{approx3}.  
\end{document}